\documentclass[superscriptaddress, prx,aps, reprint,longbibliography]{revtex4-2}
\usepackage{amsmath}
\usepackage{enumitem}
\usepackage{amsthm}
\usepackage{amssymb}
\usepackage{mathtools}
\usepackage[linesnumbered,ruled]{algorithm2e}
\SetEndCharOfAlgoLine{}
\usepackage{svg}
\usepackage{subfloat}
\usepackage{placeins}
\usepackage[caption=false]{subfig}
\usepackage{wasysym}
\usepackage[hidelinks, colorlinks=true]{hyperref}
\usepackage{bm}
\usepackage{booktabs}
\usepackage{braket}
\usepackage{pifont}     
\usepackage{xcolor}
\usepackage{fontawesome}
\usepackage{multirow}   
\usepackage{hhline}     
\usepackage{array}      
\usepackage{graphicx} 

\newtheorem{theorem}{Theorem}
\newtheorem{proposition}{Proposition}
\newtheorem{definition}{Definition}
\newtheorem{corollary}{Corollary}

\newcommand{\thickhline}{\noalign{\hrule height 0.8pt}}

\newcolumntype{?}{!{\vrule width 0.8pt}}

\newcommand{\eq}[1]{Eq.~(\ref{eq:#1})}
\newcommand{\fig}[1]{Figure~\ref{fig:#1}}
\newcommand{\sect}[1]{Section~\ref{sec:#1}}
\newcommand{\tab}[1]{Table~\ref{table:#1}}
\newcommand{\app}[1]{Appendix~\ref{app:#1}}
\newcommand{\thm}[1]{Theorem~\ref{thm:#1}}
\newcommand{\propos}[1]{Proposition~\ref{propos:#1}}
\newcommand{\defn}[1]{Definition~\ref{def:#1}}
\newcommand{\corol}[1]{Corollary~\ref{corol:#1}}

\definecolor{myyellow}{rgb}{0.788, 0.62, 0.063}

\begin{document}

\title{High-performance syndrome extraction circuits for quantum codes}
\author{Armands Strikis}
\email{strikis@maths.ox.ac.uk}
\affiliation{Mathematical Institute, University of Oxford, Woodstock Road, Oxford OX2 6GG, United Kingdom}
\author{Dan E. Browne}
\affiliation{Department of Physics \& Astronomy, University College London, London, WC1E 6BT, United Kingdom}
\author{Michael E. Beverland}
\email{michael.beverland@gmail.com}
\affiliation{IBM Quantum}

\date{\today}

\begin{abstract}
We present a fast and effective framework for analysing and designing syndrome extraction circuits (SECs). 
Our approach is based on left--right circuits,
a general SEC design that maintains low depth and avoids gate interleaving constraints by staggering $X$ and $Z$ stabiliser checks.
Initially proposed for specific classes of codes~\cite{xu2024constant}, we generalise this construction to arbitrary CSS codes and optimise the circuit structure to achieve low qubit idling time, large effective distance, and reduced minimum-weight failure mechanisms. A key component of our framework is the formal notion of \emph{residual errors} and their associated distance metrics, which form lightweight proxies for capturing error propagation and quantifying the potential harm of circuit-level errors. Applying our automated framework to diverse classes of codes, we observe consistent improvements in logical performance of up to an order of magnitude compared to existing single-ancilla SEC designs. 
Moreover, by directly applying these proxies, we prove that no non-interleaving SEC can achieve circuit distance $12$ for the gross code, and we identify an explicit circuit that we conjecture achieves distance $11$, exceeding previously known constructions.
\end{abstract}

\maketitle


\section{Motivation and summary of results} 

As quantum hardware continues to mature, increasing emphasis is being placed on practical quantum error correction (QEC), both theoretically~\cite{yoder2025tourgrossmodularquantum, gidney2024magicstatecultivationgrowing, xu2024constant} and experimentally~\cite{Bluvstein2024LogicalProcessor, GoogleQuantumAI2025BelowThreshold, yamamoto2025quantumerrorcorrectedcomputationmolecular}. 
On the theory side, this effort typically involves three tightly coupled components -- the design of better error-correcting codes~\cite{bravyiEtAl2024, scruby2024highthresholdlowoverheadsingleshotdecodable, malcolm2025computingefficientlyqldpccodes}, the construction of efficient syndrome extraction circuits (SECs)~\cite{kang2025quantum, gidney2023newcircuitsopensource}, and the development of fast and effective decoders~\cite{hillmann2025localizedstatisticsdecodingquantum, muller2025improvedbeliefpropagationsufficient}. 
In this work, we isolate the middle component and introduce a framework for rapidly constructing simple, high-performance SECs for any quantum CSS code~\cite{calderbank1996good,shor1996fault,steane1996multiple}.

We focus exclusively on the common class of \emph{single-ancilla CNOT-based} SECs, as shown in \fig{sec-circuits}.
To develop a framework that applies to arbitrary CSS codes and yields performant SECs, we must address two well-known challenges.
First, achieving low depth requires parallelising the measurement of overlapping checks as much as possible.
This introduces non-trivial \emph{scheduling constraints}, since each qubit can participate in only one CNOT at a time, and interleaving non-commuting CNOTs can invalidate the circuit.

\begin{figure}
\centering
\includegraphics[width=0.95\linewidth]{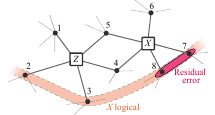}

\vspace{15pt}

\includegraphics[width=0.65\linewidth]{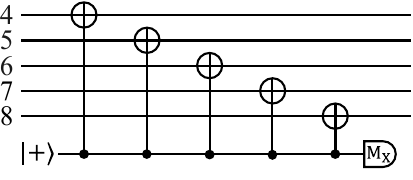}
\caption{
(Above) The Tanner graph of a general quantum CSS code can be highly irregular, with logical operators lacking simple structure. 
A weight-1 error can propagate through the circuit into high-weight residual error (here, qubits $7$ and $8$) that overlaps a minimum-weight logical operator, effectively reducing the code distance.
This is known as a hook error.
(Below) We focus on single-ancilla CNOT-based SECs, which have a dedicated ancilla to measure each check, as shown for a weight-5 X-type check here.
For Z-type checks, the ancilla is prepared in $\ket{0}$ and measured in $Z$ basis, and the CNOT direction is reversed.}
\label{fig:sec-circuits}
\end{figure}

The second challenge in SEC design arises from \emph{hook errors}, whereby weight-1 errors during the circuit can propagate through the CNOT network and produce high-weight \emph{residual errors} on the data qubits (see \fig{sec-circuits}).
These can affect logical performance in several ways~\cite{beverland2024fault}; for example, if the residual error lies within a minimal logical operator of the code, it reduces the \emph{circuit distance}, which can be thought of as the effective distance of the code when implemented by the SEC.
Hook errors can be mitigated using Shor-type syndrome extraction~\cite{shor1996fault}, by augmenting single-ancilla SECs with flag qubits~\cite{chao2018quantum, lingling2020faulttolerant, chamberland2018flag}, or through related gadget constructions~\cite{Derks2025designingfault}.
However, such techniques often incur additional circuit depth and/or qubits, typically worsening the circuit's performance in relevant noise regimes.
A different approach is to choose a CNOT schedule for which the hook errors do not propagate onto data qubits in the support of low-weight logical operators. 
This is well studied in the context of rotated surface codes, where inappropriate CNOT orderings lead to a reduced circuit distance, below the nominal code distance~\cite{Tomita2014lowdistance, Litinski2018latticesurgery}. 
While carefully chosen CNOT schedules can mitigate distance-reducing hook errors in these structured code families, identifying such schedules for general CSS codes is considerably challenging, particularly when low circuit depth is prioritised.

\begin{figure*}
    \centering
    \includegraphics[width=\linewidth]{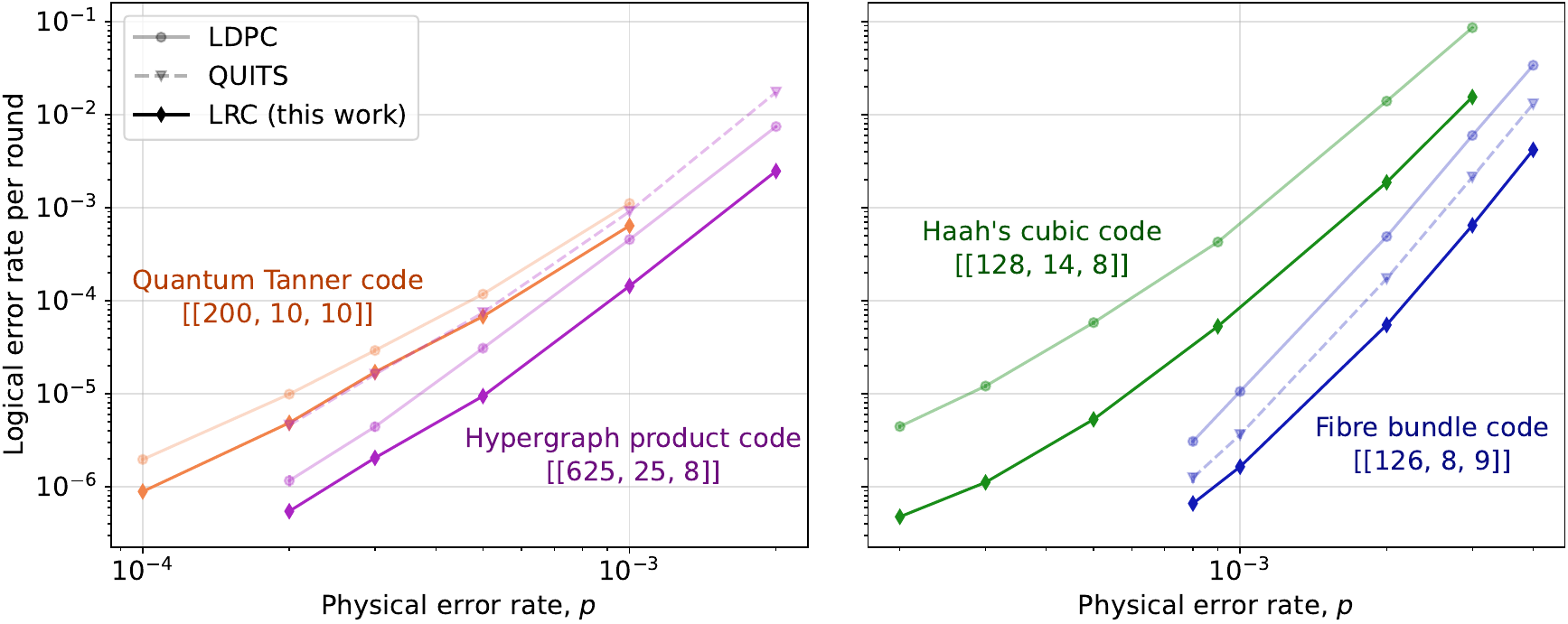}
    \caption{Performance of circuits designed using our left--right circuit (LRC) framework and general QEC packages \texttt{QUITS}~\cite{kang2025quantum} and \texttt{LDPC}~\cite{Roffe_LDPC_Python_tools_2022}, where applicable. 
    We compare SECs of a quantum Tanner code $[[200, 10, 10]]$~\cite{radebold2025explicit}, a hypergraph product code $[[625, 25, 8]]$~\cite{TremblayEtAl2022, kang2025quantum}, a Haah's cubic code $[[128, 14, 8]]$~\cite{haah2011fractal} and a novel fiber bundle code $[[126, 8, 9]]$ that we construct following methods in~\cite{hastings2021fiberbundlecodes}. Most of the error bars are comparable in size to the markers, and therefore, not visible. 
    A Python codebase for generating high-performance left--right circuits for arbitrary CSS codes is publicly available on \href{https://github.com/PurePhys/LR-circuits}{GitHub}. 
    The exact \texttt{Stim} circuits are available on \href{https://doi.org/10.5281/zenodo.18853600}{Zenodo}.
    }
    \label{fig:data}
\end{figure*}

To help overcome the scheduling challenge, we base our framework on \emph{left--right circuits}, a general design for single-ancilla, CNOT-based SECs that partitions the data qubits into two subsets and staggers $X$ and $Z$-type stabiliser checks with respect to that partition (see \fig{lr-circuit}). 
Crucially, this circuit structure sidesteps the interleaving constraints of non-commuting CNOTs.
Previously proposed for specific codes under the name \emph{pipelined product colouration circuits}~\cite{xu2024constant}, we generalise this construction to arbitrary CSS codes and show that left--right circuits can attain low depth for many practically relevant classes of quantum LDPC codes.
Note that this circuit structure is not limited to single-ancilla CNOT-based circuits and has been used in more complex architectures, such as those optimised for Majorana-based hardware~\cite{GransSamuelsson2024improvedpairwise}.

The left--right circuit structure also provides substantial CNOT scheduling freedom which can be exploited to avoid schedules with problematic hook errors.
To efficiently but rigorously identify such schedules, we study properties of the residual-error set $\mathcal{E}$ associated with a given left--right circuit, where each residual error is the set of data qubits to which a weight-1 hook error propagates.

We capture the effect of residual errors using two new generalised distance definitions.
To quantify the intuition that a residual error $E \in \mathcal{E}$ is more harmful when it overlaps a logical operator, we define its \emph{residual distance} $\Delta(E)$ as the smallest number of additional data-qubit errors required, together with $E$, to form a logical operator.
For a distance-$d$ code, if $\Delta(E) < d$, the underlying hook error implies a reduced circuit distance, however if $\Delta(E) > d$, the underlying hook error is potentially less harmful than a single data-qubit error.

The residual distance captures the effect of individual residual errors, but it can be useful to analyse the effect of a set of residual errors $\mathcal{R} \subseteq\mathcal{E}$ in combination. 
For this, we define an extended version of the code, where each residual error is treated as another single-qubit error by appending a column to the check matrix, for which one can compute the \emph{extended-code distance} $d_{\mathrm{ext}}(\mathcal{R})$.
We then prove that for any left--right circuit with residual error set $\mathcal{E}$, the circuit distance $d_{\mathrm{circ}}$ satisfies
\begin{equation}
d_{\mathrm{circ}} = d_{\mathrm{ext}}(\mathcal{E}) \le d_{\mathrm{ext}}(\mathcal{R}) \le \Delta(E)
\qquad
\forall E\in\mathcal{R} \subseteq\mathcal{E}. \label{eq:bounds}
\end{equation} 
This is a special case of our more general \thm{circ_dist_bounds} applicable to any \emph{non-interleaved} SEC, which requires more nuanced definitions to state in full. In addition, we prove that interleaving $X$ and $Z$-type stabiliser checks can only improve the circuit distance compared to the non-interleaved case.

From a practical perspective, since both $\Delta(E)$ and $d_{\mathrm{ext}}(\mathcal{R})$ are defined in the code-capacity model, they are significantly cheaper to compute or upper bound than $d_{\mathrm{circ}}$ defined over the circuit noise model.
Among other things, these distances allow efficient screening of wide range of circuits by rapidly computing an upper bound for $d_{\mathrm{circ}}$.

We use the extended-code distance to prove bounds on the achievable circuit distance of a broad class of SECs for the gross code~\cite{bravyiEtAl2024}. 
We show that no non-interleaved circuit, including any left--right circuit, can attain circuit distance $12$, and that no uniformly tiled non-interleaved circuit can attain circuit distance $11$. 
Applying our bounds to residual errors from a growing sequence of check sets, we progressively restrict the search space to ultimately identify an explicit non-uniform left--right circuit. 
Supported by extensive testing, we conjecture this circuit achieves the optimal circuit distance $d_{\mathrm{circ}} = 11$, improving upon the standard circuit of Ref.~\cite{bravyiEtAl2024} while matching its depth when amortised over many rounds.

Our main contribution is a general framework, together with associated open-source codebase, that automatically searches and constructs high-performance left--right circuits for any CSS code.
To do so, the framework incorporates residual errors and their associated distances into a heuristic for rapidly ranking left--right circuits with different CNOT schedules. 
Circuits in which all (or most) checks exhibit large residual distance $\Delta(E)$ are preferred and ranked higher. 
Together with additional criteria, such as low ancilla idling time, this ranking identifies optimised left--right circuits for a diverse set of quantum LDPC codes, with the performance of each circuit plotted in \fig{data}. 
We find these optimised left--right circuits consistently reduce both circuit depth and logical failure rates relative to alternative SEC constructions, with improvements in logical error rates of up to an order of magnitude.

\section{Preliminaries and Notation}

In this section, we introduce concepts and definitions used throughout the paper. 

\subsection{Syndrome extraction circuits}
\label{sec:SEC}

A CSS stabiliser code on $n$ data qubits is defined by binary parity-check matrices $H_X\in\mathbb{F}_2^{m\times n}$ and $H_Z\in\mathbb{F}_2^{m'\times n}$ obeying $H_X H_Z^{\mathsf T}=0$. 
By specifying $(H_X,H_Z)$, we also fix the stabiliser generators of the code that will be measured.
Specifically each row of $H_X$ (resp.\ $H_Z$) defines an $X$-type (resp.\ $Z$-type) stabiliser generators with support given by the columns equal to $1$. 
We refer to the chosen stabiliser generators as \emph{checks} for short.

Note that the matrix $H \in \{H_X, H_Z\}$ has an equivalent representation as a bipartite Tanner graph, which we also write as $H$, with a circular variable node $v_j$ for each column $j$, a square check node $z_i$ for each row $i$ and an edge $(v_j,z_i)$ iff the $(i,j)$ entry of $H$ is 1. 
We also define the tripartite Tanner graph $H_{XZ}$, which has a circular variable node $v_j$ for each column $j$, a square $Z$-check node $z_i$ for each row $i$ of $H_Z$, and a square $X$-check node $x_i$ for each row $i$ of $H_X$. 
There is an edge $(v_j,z_i)$ iff the $(i,j)$ entry of $H_Z$ is 1 and an edge $(v_j,x_i)$ iff the $(i,j)$ entry of $H_X$ is 1. 
The matrix representation of $H_{XZ}$ is $H_X$ vertically stacked above $H_Z$.
Bipartite Tanner graphs $H_X$ and $H_Z$ are recovered from $H_{XZ}$ by deleting the $Z$-check nodes (resp. $X$-check nodes) and their incident edges. 

Let $\delta(\mathcal{G})$ denote the maximum node degree of any graph $\mathcal{G}$.
We also denote the maximum row and column weights of any matrix $M$ as $r(M)$ and $c(M)$ respectively, such that $\delta(M) = \mathrm{max}(r(M), c(M))$.

A \emph{syndrome extraction circuit} (SEC) measures all checks of $H_X$ and $H_Z$, and is typically repeated for multiple rounds.
Throughout this work, we assume that each check is measured using a dedicated ancillary qubit, the same one for each round.
For an $X$ ($Z$) check, the ancilla is prepared in $\ket{+}$ ($\ket{0}$), CNOT gates are applied from ancilla to data (data to ancilla) qubits on its support and the ancilla is measured in the $X$ ($Z$) basis. 
All circuit operations (preparations, measurements, one- and two-qubit gates, and idle operations) take one time step.
The SEC is specified by assigning, for each check, the time steps of the ancilla preparation, associated CNOTs, and measurement. 
If multiple rounds of an SEC are implemented, they may be staggered so that a new round can begin before the previous one ends. 

In general, CNOTs from checks that share data qubits may be interleaved. 
In contrast, \emph{non-interleaved} circuits are those in which either all $X$-type CNOTs precede all overlapping $Z$-type CNOTs, or all $Z$-type CNOTs precede all overlapping $X$-type CNOTs.
An extreme example of non-interleaved circuits is an \emph{alternating-XZ} circuit, in which all $X$-type checks are measured completely, and then all $Z$-type checks are measured completely before repeating.

\subsection{Depth bounds}

We define the depth of an SEC as
\begin{equation}\label{eq:SEC-depth}
    t=\lim_{m\to\infty} \frac{T^{(m)}}{m},
\end{equation}
where $T^{(m)}$ is the number of time steps required to perform $m$ consecutive rounds of syndrome extraction. 
This definition permits staggered execution of different check measurements, so that operations from different rounds on disjoint qubits can overlap in time and the depth is amortised over many rounds.
We can bound the depth,
\begin{equation}
t \ge \text{max}(c(H_{XZ}),r(H_{XZ}) + 2), \label{eq:depth-bound}
\end{equation}
since some data qubit participates in $c(H_{XZ})$ CNOTs, and some ancilla qubit participates in $r(H_{XZ})$ CNOTs along with an ancilla preparation and a measurement.

\subsection{Noise model and logical error rate estimation}
Unless stated otherwise, we assume standard circuit noise, where each circuit operation fails independently with probability $p$. 
Upon failure, (i) an idle operation is followed by a uniformly random Pauli gate, $X$, $Y$, or $Z$; (ii) a CNOT is followed by a uniformly random non-identity two-qubit Pauli gate; (iii) a preparation $|0\rangle/|+\rangle$ flips to $|1\rangle/|-\rangle$; (iv) a $Z$/$X$ measurement has its outcome flipped.

To characterise the performance of a given SEC for a specific CSS code, we use a fairly standard approach to estimate the logical error probabilities $P^{(m)}_X(p)$ and $P^{(m)}_Z(p)$ for both $X$- and $Z$-basis $m$-round memory experiments with a specified decoder.
These consist of $m$ noisy rounds of the SEC with error rate $p$, followed by a perfect round of syndrome extraction (which represents measuring out all data qubits).
For each memory experiment circuit, we construct the respective \emph{detector error model} using \texttt{Stim}~\cite{Gidney_2021} and \texttt{LDPC}~\cite{Roffe_LDPC_Python_tools_2022} packages. This model includes a binary matrix $H_{\mathrm{circ}}$, where each row $i$ of $H_{\mathrm{circ}}$ corresponds to a detector and each column $j$ to an individual \emph{error mechanism} in the circuit.
The entry $H_{\mathrm{circ}}(i,j)=1$ iff the error mechanism $j$ flips the value of the detector $i$. 
We call a collection of error mechanisms an \emph{error configuration}, which is represented as a bitstring $e$ and produces syndrome $\sigma = H_{\mathrm{circ}}e$. 
In addition to the matrix $H_{\mathrm{circ}}$, the detector error model also specifies the probability distribution for error configurations and the action of each error mechanism on the logical observables which differ between the $X$-basis and $Z$ basis memory experiments (see~\cite{Derks2025designingfault} for a more thorough introduction to detector error models).

To estimate $P^{(m)}_X(p)$ and $P^{(m)}_Z(p)$ in practice, we run many independent trials, each of which samples an error configuration $e$ of the corresponding detector error model and outputs its syndrome $\sigma$. Then, the decoder provides a correction $\hat{e}=\mathcal{D}(\sigma)$.  
If $\hat{e}$ and $e$ have both the same syndrome and the same action on the logical observables, the trial succeeds, and it fails otherwise.
For a distance-$d$ code, we set $m=d$ and calculate the total logical error rate as $P^{(d)} = P^{(m)}_X(p) + P^{(m)}_Z(p)$. Finally, we report the per-round \emph{logical error rate} $P(p):=P^{(d)}(p)/d$.

We take the standard definition of $X$-type and $Z$-type \emph{circuit distance} $d^X_{\mathrm{circ}}$ and $d^Z_{\mathrm{circ}}$ as the minimum number of error mechanisms 
in an $X$-basis or a $Z$-basis memory experiment respectively that flip no detectors but flip at least one logical observable.
We then define the overall \emph{circuit distance} to be $d_{\mathrm{circ}} = \mathrm{min}(d^X_{\mathrm{circ}}, d^Z_{\mathrm{circ}})$.
In the small-$p$ limit and assuming an optimal decoder, the circuit distance determines the leading-order scaling of the logical failure probability
\begin{equation}\nonumber
P(p) \;\xrightarrow[p\to 0]{}\; N_{\mathrm{fail}}\,p^{\lceil d_{\mathrm{circ}}/2\rceil},
\end{equation}
where $N_{\mathrm{fail}}$ denotes the \emph{minimum-weight failure multiplicity}~\cite{fowler2012surface}.
Desirable SECs attain large $d_{\text{circ}}$ and small $N_{\text{fail}}$, and have low depth—both to reduce the number of idling operations (often resulting in lower $N_{\text{fail}}$) and to shorten the logical clock cycle.

Note that, when estimating logical error rates in practice (as in \fig{data}), efficient decoders are typically preferred over optimal ones, since optimal decoding is computationally intractable in general.
We therefore use the belief-propagation ordered-statistics (BPOSD) decoder~\cite{roffe2020} for all codes considered in this work (see \app{sim-details} for parameters).

\section{Left--right circuits}
\label{sec:LRC}

\begin{figure}[t]
  \centering
  \includegraphics[width=1.0\linewidth]{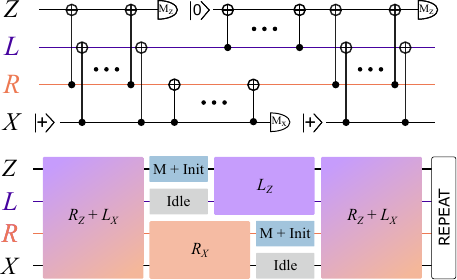}
  \caption{Left–right syndrome-extraction circuit. Data qubits are partitioned into left/right sets, inducing $H_X=[L_X\,|\,R_X]$ and $H_Z=[L_Z\,|\,R_Z]$. CNOTs for $L_X$ and $R_Z$ are executed in parallel for $t_1$ time steps, then $L_Z$ and $R_X$ for $t_2$ time steps. Ancilla preparations/measurements are staggered so all $Z$-check CNOTs precede all $X$-check CNOTs of the same round, ensuring validity and leading to the total circuit depth $t_1+t_2+2$.
  }
  \label{fig:lr-circuit}
\end{figure}

In the following, we describe \emph{left--right circuits} (LRCs), a class of SECs that achieve low depth for a wide range of practically relevant quantum LDPC codes. 
While originally proposed for specific classes of codes~\cite{xu2024constant}, we generalise their application to arbitrary CSS codes.

\subsection{Circuit structure}

To construct an LRC of a CSS code $(H_X, H_Z)$, first partition the $n$ data qubits into two subsets: 
a left set of $l$ qubits and a right set of $r = n-l$ qubits, and write
\begin{align}
H_X &= [ L_X \mid R_X ], \label{eq:pcm1}\\
H_Z &= [ L_Z \mid R_Z ], \label{eq:pcm2}
\end{align}
where $L_X \in \mathbb{F}_2^{m \times l}$, $R_X \in \mathbb{F}_2^{m \times r}$, $L_Z \in \mathbb{F}_2^{m' \times l}$, and $R_Z \in \mathbb{F}_2^{m' \times r}$.  
For each $H\in\{L_X,R_X,L_Z,R_Z\}$, fix a proper edge-colouring $C(H)$ of the bipartite Tanner graph $H$, i.e., a map from edges to $\{1,\dots,|C(H)|\}$ such that edges incident to a common vertex receive distinct colours. 
Here, $|C(H)|$ is the total number of colours used.
Since $H$ is bipartite, there exist efficiently-computable minimal edge-colourings with $|C(H)|=\delta(H)$~\cite{alon2003simple}, although we do not require that the colourings are minimal.

The left--right circuit structure is fully specified by the edge-colourings $C(L_X), C(R_X), C(L_Z),C(R_Z)$ as follows.
CNOTs corresponding to $L_X$ and $R_Z$ are applied in parallel, scheduled by their edge colours (note that we use integers rather than actual colours to label the colouring for this reason).
This requires
\[
t_1 = \max\bigl(|C(L_X)|, |C(R_Z)|\bigr)
\]
time steps (see \fig{lr-circuit}). 
Next, CNOTs corresponding to $L_Z$ and $R_X$ are applied analogously, where the time step for each colour label is shifted by $t_1+1$, requiring 
\[
t_2 = \max\bigl(|C(L_Z)|, |C(R_X)|\bigr)
\]
additional time steps.  
The ancilla initialisations and measurements are staggered in the following way.
The $X$ ancilla is prepared at time step $1$ and measured at time step $t_1 + t_2 + 2$,  
while the $Z$ ancilla is prepared at time step $t_1 + 3$ and measured at time step $t_1 + 2$. Therefore, the total SEC depth (recall \eq{SEC-depth}) is
\begin{equation}
t = t_1 + t_2 + 2.
\label{eq:left--right-depth}
\end{equation}
The staggering guarantees that the circuit is valid, since all CNOTs for the $Z$ checks are applied before the CNOTs for the $X$ checks of the same round, i.e., the SEC is non-interleaved. Note that the circuit design is not limited to a single-time step ancilla initialisation and measurement; longer operations are also accommodated by the same design rules.
Finally, we will refer to a left--right circuit as $\mathcal{C}$, by which we mean the set of four colourings which fully specify the circuit.

\subsection{Low depth LRCs}

While the left--right structure can be applied to any CSS code, it is desirable to choose a qubit partition and edge-colourings that minimise the overall circuit depth.
Many practically relevant code classes have a natural partitioning choice that comes directly from their structure. When combined with optimal edge-colourings, the depth of LRCs with such partitions is often close to optimal.
Here, in the main text, we prove this claim for the hypergraph product (HGP) codes~\cite{tillich2014quantum} as an illustrative example. 
We extend these arguments to a more general class of lifted product (LP) codes~\cite{panteleev2020quantum} (including HGP codes and bivariate bicycle codes as special cases~\cite{bravyiEtAl2024}) in \app{code-partitions}.

A general HGP code has check matrices
\begin{align*}
H_X & = [A \otimes I_{m_B} \,|\, I_{m_A} \otimes B],\\
H_Z & = [I_{n_A} \otimes B^T \,|\, A^T \otimes I_{n_B}],
\end{align*}
for any choice of binary matrices $A \in \mathbb{F}_2^{m_A \times n_A}$ and $B \in \mathbb{F}_2^{m_B \times n_B}$. Here, $I_i$ is identity matrix of size $i$.

\begin{proposition}[HGP circuit depth]
    For any HGP code defined by binary matrices $A$ and $B$, the depth of its minimal colouring LRC is $t = \delta(A) + \delta(B) +2$, close to the lower bound of $\delta(A) + \delta(B)$.
\end{proposition}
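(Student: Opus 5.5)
We use the natural partition for the HGP code: the left block of $n_A m_B$ qubits and the right block of $m_A n_B$ qubits. This gives $L_X = A\otimes I_{m_B}$, $R_X = I_{m_A}\otimes B$, $L_Z = I_{n_A}\otimes B^T$ and $R_Z = A^T\otimes I_{n_B}$. By \eq{left--right-depth}, a minimal-colouring LRC has depth $t = t_1+t_2+2$, where $t_1=\max(\delta(L_X),\delta(R_Z))$ and $t_2=\max(\delta(L_Z),\delta(R_X))$. Here we use that bipartite graphs admit proper edge-colourings with exactly $\delta(H)$ colours~\cite{alon2003simple}. So the main task is to compute the four maximum degrees.

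\textbf{Computing the degrees.} The key observation is that the Tanner graph of $M\otimes I_k$ is a disjoint union of $k$ copies of the Tanner graph of $M$. Hence its row and column weights equal those of $M$, and $\delta(M\otimes I_k)=\delta(M)$. The same holds for $I_k\otimes M$, which is block diagonal with $k$ copies of $M$. Transposition swaps row and column weights, so $\delta(M^T)=\delta(M)$. Therefore
\[
\delta(L_X)=\delta(R_Z)=\delta(A), \qquad \delta(L_Z)=\delta(R_X)=\delta(B).
\]
It follows that $t_1=\delta(A)$ and $t_2=\delta(B)$, so $t=\delta(A)+\delta(B)+2$. If some factor is empty, say $m_B=0$, the statement degenerates and we can ignore this case.

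\textbf{Lower bound.} We apply \eq{depth-bound} and compute $r(H_{XZ})$ and $c(H_{XZ})$. A row of $H_X$ indexed by $(i,j)\in[m_A]\times[m_B]$ has weight $|A_{i,\cdot}|+|B_{j,\cdot}|$. Maximising independently over $i$ and $j$ gives $r(H_X)=r(A)+r(B)$, and similarly $r(H_Z)=c(B)+c(A)$. A left column indexed by $(a,b)\in[n_A]\times[m_B]$ has weight $c(A)$ in $H_X$ plus $r(B)$ in $H_Z$, so it is $c(A)+r(B)$. A right column has weight $r(A)+c(B)$. This gives $r(H_{XZ})=\max(r(A)+r(B),\,c(A)+c(B))$ and $c(H_{XZ})=\max(c(A)+r(B),\,r(A)+c(B))$.

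\textbf{Main obstacle.} The subtle step is the phrase ``close to the lower bound $\delta(A)+\delta(B)$''. The four combinations $r(A)+r(B)$, $c(A)+c(B)$, $c(A)+r(B)$ and $r(A)+c(B)$ together cover every pairing of $\{r(A),c(A)\}$ with $\{r(B),c(B)\}$. Their maximum is therefore exactly $\max(r(A),c(A))+\max(r(B),c(B))=\delta(A)+\delta(B)$. Combining with \eq{depth-bound}, every SEC has $t\ge\delta(A)+\delta(B)$; in fact $t\ge\delta(A)+\delta(B)+2$ whenever the maximum is attained by a row-weight pair. The minimal-colouring LRC therefore exceeds the lower bound by at most $2$, which is the precise meaning of ``close'' that we would prove.
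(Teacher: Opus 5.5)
Your proposal is correct and follows essentially the same route as the paper: the natural block partition, the disjoint-copies argument giving $t_1=\delta(A)$ and $t_2=\delta(B)$, and the computation of $r(H_{XZ})$ and $c(H_{XZ})$, whose four pairings $r(A)+r(B)$, $c(A)+c(B)$, $c(A)+r(B)$, $r(A)+c(B)$ combine in \eq{depth-bound} to give the lower bound $\delta(A)+\delta(B)$. Your explicit indexing of rows and columns by pairs, such as $(i,j)\in[m_A]\times[m_B]$, derives these weights more transparently than the paper's row-block argument, and your remark on when the $+2$ is tight matches the paper's observation for the case $r(H_{XZ})\ge c(H_{XZ})$.
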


Moreover, it is always possible to efficiently identify a minimal colouring using the algorithm in Ref.~\cite{alon2003simple}.

\begin{proof}
Notice that $A \otimes I_{m_B}$ and $I_{n_A} \otimes B^T$ have the same number of columns, allowing a left--right partition forming blocks $L_X = A \otimes I_{m_B}$, $L_Z=I_{n_A} \otimes B^T$, $R_X = I_{m_A} \otimes B$, $R_Z=A^T \otimes I_{n_B}$. 
The bipartite Tanner graph representation of $A \otimes I_s$ is $s$ disconnected copies of $A$, implying $\delta(A \otimes I_s) = \delta(A^T \otimes I_s) = \delta(A)$ and similarly for $B$, where we also used the property $\delta(A^T) = \delta(A)$. 
Therefore, following \eq{left--right-depth}, choosing a minimal colouring $\mathcal{C}(H)$ for each $H \in \{L_X,L_Z,R_X,R_Z\}$ results in an LRC of depth $t = \delta(A) + \delta(B) +2$.
To complete the proof, it is then sufficient to make use of \eq{depth-bound} and show that the degree of the tripartite Tanner graph is $\delta(H_{XZ}) = \delta(A) + \delta(B)$, which we show as follows.

Denote the maximum row and column weights of any matrix $M$ as $r(M)$ and $c(M)$ respectively, such that $\delta(M) = \mathrm{max}(r(M), c(M))$.
First, consider the matrix $A \otimes I_{r_B}$ and partition its rows into blocks of $r_B$ rows. 
Due to the tensor product with identity on the right, there exists a row-block where all of its $r_B$ rows have weight $r(A)$. 
Now, consider the same blocking of rows but instead for matrix $I_{r_A} \otimes B$. 
Due to the tensor product with identity on the left, each row-block has at least one row with row weight $r(B)$. 
Combining both results, there must be a row of $H_X$ that has row weight $r(H_X) = r(A) + r(B)$.
Following similar arguments, the max row weight of $H_Z$ is $r(H_Z) = r(A^T) + r(B^T) = c(A) + c(B)$.
Therefore $r(H_{XZ}) = \mathrm{max}(r(A) + r(B),c(A) + c(B))$.

Moreover, by essentially the same argument, one can show that the maximum column weight of the stacked matrix restricted to the left and right qubits, denoted $H_{XZ}|_L$, satisfies $c(H_{XZ}|_L) = c(A) + r(B)$ and similarly that $c(H_{XZ}|_R) = r(A) + c(B)$.
Therefore, $c(H_{XZ}) = \mathrm{max}(c(A) + r(B),r(A) + c(B))$.

Putting these pieces together, we have  
\begin{eqnarray*}
\delta(H_{XZ}) &=& \mathrm{max}(r(H_{XZ}),c(H_{XZ})), \\
&=& \mathrm{max}(r(A) + r(B), c(A) + c(B),\\ 
&&c(A) + r(B), r(A) + c(B)), \\
&=& \mathrm{max}(r(A), c(A)) + \mathrm{max}(r(B), c(B)),
\\
&=& \delta(A) + \delta(B).
\end{eqnarray*}
\end{proof}
In fact, when $r(H_{XZ}) \geq c(H_{XZ})$, the minimal colouring LRC saturates the bound \eq{depth-bound}, and therefore has optimal depth (rather than just being optimal to within an additive constant 2).
We also refer the reader to a recent proof of the depth of staggered circuits for a special class of cyclic HGP codes in~\cite{aydin2025cyclichypergraphproductcode}.

As shown above, the depth of a left–right circuit is determined solely by the colour counts $|C(L_X)|,|C(R_X)|,|C(L_Z)|,|C(R_Z)|$. 
However, many distinct edge-colourings realise the same counts, producing different SECs with identical depth. 
These choices can significantly alter hook-error propagation, and therefore, the circuit distance $d_{\text{circ}}$ and min-weight failure multiplicity $N_{\text{fail}}$. 
We next introduce concepts that will constitute rapid evaluation and ranking of such circuits.

\section{Residual errors and circuit distance bounds}
\label{sec:Freedom}

In this section, we introduce analytical methods to characterise the effect of hook errors in syndrome extraction circuits. We use these results to rapidly bound circuit distances as part of our SEC design framework, and additionally apply them to the gross code to construct a circuit with distance that we believe to be optimal.

\subsection{Residual errors}

We adopt a simplified model to capture the effect of hook errors from a single check in an LRC. 
We focus on $Z$-checks, however, the $X$-check case is analogous with $Z\!\leftrightarrow\! X$. 

Let $i_1,\ldots,i_w$ be the $w$ data qubits in the support of some $j^{th}$ $Z$-check such that they are listed in the circuit's CNOT order. 
Furthermore, let $e_k\in\mathbb{F}_2^n$ be the unit vector with a $1$ at position $k$ labelling a single data qubit.
We define the circuit's set of \emph{residual errors} $E$ associated  with this check to be
\[
\mathcal{E}_{Z,j}
:= \bigl\{\,E_l=\textstyle\sum_{s=l}^w e_{i_s}\;:\; l=2,\ldots,w\,\bigr\}\subset\mathbb{F}_2^n,
\]
and by $\mathcal{E}_Z = \bigsqcup \mathcal{E}_{Z,j}$ we denote the disjoint union of residual errors from all $Z$ checks. We use the disjoint union to avoid confusing the same residual errors from different checks.
Conceptually, residual errors are Z-type errors on the data qubits which can arise from a single hook error on the $Z$-check ancilla (see \fig{circuits}). 
Note the correspondence between the data qubit index sets and residual errors -- we use them interchangeably.

\begin{figure}[tbp]
    \centering
    \includegraphics[width = 0.9\linewidth]{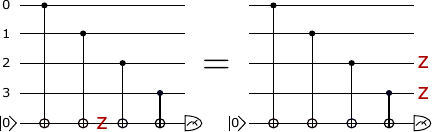}
    \caption{
    Each operation in the circuit involves at most one data qubit, yet an error can propagate to multiple data qubits through subsequent steps. 
    For instance, a $Z$-error on the ancilla after a CNOT gate propagates to every data qubit involved in later CNOT operations. 
    For the given check, the set of all residual errors is $\{\{3\}, \{2,3\}, \{1, 2, 3\}\}$. 
    Here, we display the residual error $\{2,3\}$.
    }
    \label{fig:circuits}
\end{figure}

A residual error is generally more harmful if it significantly overlaps a non-trivial logical operator, in some cases, even a single qubit overlap can make a noticeable difference. 
To quantify this, we introduce two distance-related metrics applied to either a residual error or a set of residual errors. 

\begin{definition}[Residual distance]\label{def:residual_dist}
For a CSS code with parity check matrices $H_X$, $H_Z$, the residual distance of a $Z$-type residual error $E\in\mathcal{E}_Z$ is given as
\begin{equation}
\Delta(E):=1\!+\!\min_{D\in\mathbb{F}_2^n}\bigl\{|D| \! :\ \!\! (E\!+\!D) \in \ker (H_X) \!\setminus\! \operatorname{span}(H_Z)\bigr\}, \nonumber
\end{equation}
where $|\cdot|$ is the Hamming weight.
\end{definition}
That is, the residual distance $\Delta(E)$ is the smallest number of data qubit errors (in addition to the residual error $E$) that are required to complete any non-trivial logical operator. Note that $\Delta(E)$ can also exceed code distance $d$, indicating that some residual errors may be less harmful than weight-1 errors. Consequently, $\Delta(E)$ provides a quantitative measure of the potential harm caused by $E$.

To define the second distance-related metric, we need to first describe the concept of the \emph{extended code}, which builds on the extended Tanner graph introduced in ~\cite{pacenti2025turboannihilationhookerrorsstabilizer}. 
An extended code is an extension of the CSS code, where additional columns, reflecting the hook errors, are added to the corresponding check matrix for each residual error considered. 
The extra columns are similarly applied to the logical operator matrices $L_X$, $L_Z$.

\begin{definition}[Extended code]\label{def:ext_code}
Given $H_X$, $L_X$ and a set of $Z$-type residual errors $\mathcal{R}\subseteq \mathcal{E}_Z$ for an SEC, define the extended matrices $H_X^{\mathrm{ext}}(\mathcal{R})$, $L_X^{\mathrm{ext}}(\mathcal{R})$ by appending one column to $H_X$ and one column to $L_X$ for each residual error in $\mathcal{R}$.
A residual column is defined by adding the columns of $H_X$, $L_X$ (mod 2) associated with each data qubit in the residual error.
\end{definition}

\noindent For example, if a given residual error has support on data qubits $\{i, j\}$, then the appended column to $H_X^{\mathrm{ext}}$ corresponding to this residual is equal to the sum of $i$ and $j$ columns (mod 2) of $H_X$.
With this, we can define the \emph{extended-code distance}.

\begin{definition}[Extended-code distance]\label{def:ext_code_dist}
The $Z$-type extended-code distance $d^Z_{\mathrm{ext}}(\mathcal{R})$ of a set of residual errors $\mathcal{R}\subseteq \mathcal{E}_Z$ is the minimum Hamming weight of a bit string $x$, such that
\begin{equation}\label{eq:ext_code_dist}
H_X^{\mathrm{ext}} x^{\mathsf T} = 0
\quad\text{and}\quad
L_X^{\mathrm{ext}} x^{\mathsf T} \neq 0 .
\end{equation}
\end{definition}
\noindent 
We will consider both the full residual set $\mathcal{R} = \mathcal{E}_Z$ and subsets $\mathcal{R}$ corresponding to the residuals for individual checks. 
The corresponding definitions for $X$-type residual errors and extended-code distance are obtained analogously by exchanging $X$ and $Z$. Finally, both types of residual error sets can be combined using the disjoint union $\mathcal{R} = \mathcal{R}_X\sqcup \mathcal{R}_Z$, which allows us to distinguish the residual errors of each type.
Then, denote by $d_{\mathrm{ext}}(\mathcal{R})$ the combined extended code distance
\begin{equation}\label{eq:ext_dist_combined}
    d_\mathrm{ext}(\mathcal{R}) = \min (d^X_\mathrm{ext}(\mathcal{R}_X), d^Z_\mathrm{ext}(\mathcal{R}_Z)),
\end{equation}
where $\mathcal{R}_X \subseteq \mathcal{E}_X$ and $\mathcal{R}_Z \subseteq \mathcal{E}_Z$ correspond to subsets of the $X$ and $Z$-type residual errors. When $\mathcal{E} = \mathcal{E}_X \sqcup \mathcal{E}_Z$, we will refer to $d_\mathrm{ext}(\mathcal{E})$ as the \emph{full extended-code distance}.

While the residual distance and the extended-code distance are closely related metrics, they differ in how they modify the underlying distance computation. The former can be viewed as altering the constraints in the code-distance search, whereas the latter introduces additional variables into the model. This distinction, together with the different scales at which the two metrics can operate, leads to specialised use cases for each.
Because both the residual distance and the extended-code distance are based on the code capacity model, they can be estimated on significantly shorter timescales than the full circuit-level distance $d_{\mathrm{circ}}$. 
In fact, computing the residual distance of a given error $E$ is computationally equivalent in size to estimating the original code distance.

\subsection{Circuit distance bounds}

In this subsection, we establish the following provable bounds relating the residual distance and the extended-code distance to the circuit distance of syndrome extraction circuits. 

\begin{theorem}[Circuit distance bounds]
\label{thm:circ_dist_bounds}
Let $\mathcal{C}$ be any single-ancilla, CNOT-based syndrome extraction circuit for a CSS code, and let $\mathcal{C}'$ be a non-interleaved circuit obtained from $\mathcal{C}$ by preserving the relative CNOT ordering within each check. 
Denote their circuit distances by $d_{\mathrm{circ}}$ and $d'_{\mathrm{circ}}$, and let $\mathcal{E} = \mathcal{E}_X \sqcup \mathcal{E}_Z$ be their complete set of residual errors (the same for both $\mathcal{C}$ and $\mathcal{C}'$). 
Then
\begin{align}
d'_{\mathrm{circ}} &\le  \Delta(E) \quad \forall E \in \mathcal{E}, \\
d'_{\mathrm{circ}} &\le d_{\mathrm{ext}}(\mathcal{R}) \quad \forall \mathcal{R} \subseteq \mathcal{E}, \\
d'_{\mathrm{circ}} &= d_{\mathrm{ext}}(\mathcal{E}), \\
d'_{\mathrm{circ}} &\le d_{\mathrm{circ}}.
\end{align}
\end{theorem}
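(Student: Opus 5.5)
The plan is to reduce all four statements to the equality $d'_{\mathrm{circ}} = d_{\mathrm{ext}}(\mathcal{E})$, together with a monotonicity property of $d_{\mathrm{ext}}$ and a one-directional version of the same argument for $\mathcal{C}$.

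\textbf{Easy reductions.} First, $d_{\mathrm{ext}}$ is antitone in $\mathcal{R}$. If $\mathcal{R}\subseteq\mathcal{R}'$, then $H_X^{\mathrm{ext}}(\mathcal{R})$ is a column-submatrix of $H_X^{\mathrm{ext}}(\mathcal{R}')$, and likewise for $L_X^{\mathrm{ext}}$. Padding a feasible $x$ for $\mathcal{R}$ with zeros therefore gives a feasible $x$ for $\mathcal{R}'$ of the same weight, so $d_{\mathrm{ext}}(\mathcal{E})\le d_{\mathrm{ext}}(\mathcal{R})$. Second, $d_{\mathrm{ext}}(\{E\})\le\Delta(E)$. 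Take $D$ attaining the minimum in \defn{residual_dist} and let $x$ have a $1$ on the appended column of $E$ and $1$s on the support of $D$. Then $H_X^{\mathrm{ext}}x^{\mathsf T}=H_X(E+D)^{\mathsf T}=0$. Since $E+D\in\ker H_X\setminus\operatorname{span}(H_Z)$, some logical $X$ operator anticommutes with $E+D$, so $L_X^{\mathrm{ext}}x^{\mathsf T}\neq 0$. The weight of $x$ is $1+|D|=\Delta(E)$. Once $d'_{\mathrm{circ}}=d_{\mathrm{ext}}(\mathcal{E})$ is established, the first two inequalities follow immediately. All arguments are done for the $Z$ type; the $X$ type is symmetric, and the two combine through \eq{ext_dist_combined}.

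\textbf{Upper bound $d'_{\mathrm{circ}}\le d_{\mathrm{ext}}(\mathcal{E})$.} Take a minimal $x$ for the extended code and realise each of its columns by one circuit fault inside a single round of $\mathcal{C}'$.
\begin{itemize}
\item An ordinary column is realised by a $Z$ idle fault on that data qubit at the start of the round.
\item The residual column $E_l$ of $Z$-check $j$ is realised by a $Z$ fault on that check's ancilla immediately before its $l$-th CNOT. This fault propagates to $Z$ on $i_l,\dots,i_w$, and it does not flip the $Z$-basis ancilla measurement.
\end{itemize}
Non-interleaving is where the realisation must be justified. Because of it, every $X$-check overlapping check $j$ has all of its CNOTs on the qubits of $E_l$ either before or after all CNOTs of check $j$. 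Hence each $X$ ancilla sees either none of $E_l$ in this round or all of it, and the detector pattern equals the pattern of a data error $E_l$ placed at a single consistent time slice. Summing over the faults in $x$, the detectors are flipped according to $H_X^{\mathrm{ext}}x^{\mathsf T}=0$, split between this round and the next in a consistent way. I would argue that this split cancels in every detector. The logical observable is flipped by $L_X^{\mathrm{ext}}x^{\mathsf T}\neq0$. This gives an undetectable logical fault configuration of weight $|x|$.

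\textbf{Lower bound $d'_{\mathrm{circ}}\ge d_{\mathrm{ext}}(\mathcal{E})$, and the same bound for $\mathcal{C}$.} I expect this to be the main obstacle. Take a minimum-weight undetectable logical fault set and classify every single fault by its $Z$-component propagated to the data at the end of its round.
\begin{itemize}
\item Faults on data qubits give single-qubit errors.
\item A $Z$ component on a $Z$-ancilla after its $k$-th CNOT gives the residual $E_{k+1}$, or nothing if $k=w$.
\item A $ZZ$ fault after a data$\to$ancilla CNOT equals a $Z$ on the ancilla before that CNOT, and so gives $E_k$.
\item $X$ components on $Z$-ancillas only flip measurements.
\end{itemize}
So every fault contributes at most one extended column, plus measurement flips. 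I would then use the standard space-time telescoping argument. Detectors compare consecutive rounds and the final perfect round, so no flipped detector forces the sum of the effective data errors to lie in $\ker H_X$. Measurement-only flips must cancel in pairs, so they cost weight without contributing columns. The logical flip forces a nonzero $L_X$ value. This yields a feasible $x$ with $|x|$ at most the fault count. Because this argument only uses the end-of-round effect of each fault, it applies verbatim to the interleaved circuit $\mathcal{C}$, giving $d_{\mathrm{circ}}\ge d_{\mathrm{ext}}(\mathcal{E})=d'_{\mathrm{circ}}$.

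The hardest part will be carrying the telescoping out carefully in the interleaved case. There, a hook fault can be partially seen by an $X$-check mid-round, so I must check that the partial detection only adds detector flips and never removes them.
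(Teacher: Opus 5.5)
\textbf{Gap in the upper bound $d'_{\mathrm{circ}}\le d_{\mathrm{ext}}(\mathcal{E})$.} Most of your proposal is sound. The lower bound is the paper's argument for \propos{circ_dist_bound}: each fault contributes at most one extended column, and telescoping an $X$-check's detectors yields its final perfect syndrome. It does transfer verbatim to $\mathcal{C}$. Your worry about partial mid-round detection is moot, because the telescoped sum depends only on the final data error. Deriving the first two inequalities from the equality, via antitonicity of $d_{\mathrm{ext}}$ and $d_{\mathrm{ext}}(\{E\})\le\Delta(E)$, is also correct, and more economical than the paper's separate constructions.

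The failing step is ``I would argue that this split cancels in every detector'': in general it does not cancel. All-or-nothing visibility of each \emph{individual} fault does not stop different faults from landing on opposite sides of the same $X$-check $a$. Suppose two columns of $x$ each have odd overlap with $a$; this is allowed, since only the total $H_X^{\mathrm{ext}}x^{\mathsf T}$ vanishes. If $a$ sees one of them in round $r$ but the other only in round $r+1$, then both detectors of $a$, at rounds $r$ and $r+1$, fire.

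Your placement can produce exactly this. Take an LRC whose round starts at the $X$-ancilla preparation. A data fault at the start of round $r$ is seen by $X_r$. The hooks of the $Z$-checks prepared during round $r$, however, act after all $X_r$ CNOTs on every qubit and are first seen by $X_{r+1}$.

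\textbf{The missing idea is the paper's temporal partition.} Use the global form of non-interleaving, and insert error-free idles so that there is a $Z$-part. This part contains all CNOTs of one layer of $Z$-checks and no $X$-check CNOTs, and every $X$-check's CNOTs on each data qubit lie entirely before or entirely after it. Then place \emph{every} fault of $x$ inside that part, including the single-qubit ones (each data qubit has an error-prone operation there). At the end of the part the data carry exactly the sum of the chosen columns, which lies in $\ker H_X$. No $X$-check has yet interacted with any piece of this error, so there is no split at all: no detector fires and only the logical observable flips.

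The pairwise consequence you extracted (each overlapping $X$-check lies entirely before or after check $j$) is not enough for this. Pairwise orderings can be cyclic: say $a$ before $j$, $j$ before $b$, $b$ before $j'$, and $j'$ before $a$ on the respective shared qubits. Then, for hooks of $j$ and $j'$ that each overlap both $a$ and $b$ oddly, no choice of rounds makes $a$ and $b$ each see both hooks in the same round.
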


Before proving the statements in \thm{circ_dist_bounds}, we highlight some of their implications. 
The first and second inequalities enable efficiently-computable upper bounds on $d'_{\mathrm{circ}}$, allowing non-interleaved circuits with low circuit distance to be ruled out rapidly.
The equality is the statement that the second bound is tight when the full set of residuals is included in the extended code, and is particularly powerful -- it reduces circuit-level distance computation for non-interleaved circuits to a much smaller problem, potentially allowing the exact distance (not just a bound) to be determined using exhaustive numerical methods.
The last inequalities shows that if we have succeeded in identifying a non-interleaved circuit with a high circuit distance (perhaps using the previous bounds), then forming an interleaved version of the circuit cannot reduce the distance but may further improve it.

In the following propositions, we prove each aspect of \thm{circ_dist_bounds}.

\begin{proposition}[Residual distance bound]\label{propos:res_dist_bound}
    The circuit distance $d'_{\mathrm{circ}}$ of any non-interleaved SEC with the residual error set $\mathcal{E}$ satisfies
    \begin{equation*}
        d'_{\mathrm{circ}} \leq \Delta(E) \quad \forall E\in \mathcal{E}.
    \end{equation*}
\end{proposition}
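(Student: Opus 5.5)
The plan is to exhibit, for each residual error $E\in\mathcal{E}$, an explicit circuit-level error configuration of weight $\Delta(E)$ that flips no detectors but flips a logical observable. This directly gives $d'_{\mathrm{circ}}\le\Delta(E)$. By symmetry we treat a $Z$-type residual $E\in\mathcal{E}_{Z,j}$, arising from the $j$-th $Z$-check; the $X$-type case follows with $X\leftrightarrow Z$. By Definition~\ref{def:residual_dist}, choose $D$ with $|D|=\Delta(E)-1$ such that $E+D\in\ker(H_X)\setminus\operatorname{span}(H_Z)$. The candidate configuration, placed in some round $r$ of the $X$-basis memory experiment, has two parts:
\begin{itemize}
\item a single $Z$ fault on the ancilla of the $j$-th $Z$-check, located immediately after the CNOT to qubit $i_{l-1}$, where $E=E_l$ (for example, a CNOT failure with Pauli $I\otimes Z$ on the ancilla);
\item $|D|$ single-qubit $Z$ errors on the data qubits in $D$, inserted as idle or CNOT faults at a convenient time.
\end{itemize}
The total weight is $1+|D|=\Delta(E)$.

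The first key step is to track the ancilla fault through the remaining CNOTs of check $j$. Because these CNOTs are data$\to$ancilla, a $Z$ on the target ancilla copies onto each remaining control qubit $i_l,\dots,i_w$. This deposits exactly the data error $E$. The ancilla itself carries $Z$, which commutes with its $Z$-measurement, so the $Z$-check outcome is unaffected.

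The second step, which I expect to be the main obstacle, is to show that the data errors $E+D$ produce no detector flips. This is where non-interleaving is used. In a non-interleaved circuit, every $X$-type check overlapping qubits of $E$ either has all its CNOTs before check $j$'s CNOTs or all after. Detectors compare consecutive outcomes of the same check. Consider an $X$-check $x$ in round $r$ that overlapped $E$ only partially, so that $x$ saw only part of $E$; a single partial view would make it disagree with the next round. Non-interleaving rules this out: each $X$-check's CNOTs on $\operatorname{supp}(E)$ occur entirely before the hook (it sees nothing in round $r$ and everything in round $r+1$) or entirely after (it sees the whole of $E$ in round $r$). To make this airtight, I will pick round $r$ so that the ordering is consistent across rounds. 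I will also place the $D$ errors at a time where each $X$-check sees either all of them or none, for instance between two rounds, equivalently as idle faults at a round boundary. Then the effective $X$-syndrome change is $H_X(E+D)=0$ from some round onward and $0$ before. Hence every $X$-type detector is unflipped, and $Z$-type detectors are unaffected by $Z$ errors.

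The final step is the logical check. The final perfect round measures $X$-type information, and the accumulated data error $E+D$ lies in $\ker(H_X)\setminus\operatorname{span}(H_Z)$, so it anticommutes with some $X$-logical and flips that observable. This yields a failure of weight $\Delta(E)$, hence $d'_{\mathrm{circ}}\le d^Z_{\mathrm{circ}}\le\Delta(E)$. The subtle point to verify carefully is the claim about the partial overlap of each $X$-check with $E$; the argument should establish it for an arbitrary non-interleaved circuit, not only for LRCs.
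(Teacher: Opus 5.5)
Your fault configuration is the same one the paper uses: a single $Z$ fault on the ancilla of check $j$ after the CNOT to $i_{l-1}$, plus $\Delta(E)-1$ single-qubit $Z$ faults on $D$. The gap is in the detector argument, at the claim that the $X$-syndrome change is $H_X(E+D)=0$ ``from some round onward.''

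What you actually establish is only a per-check statement: each $X$-check sees $E\cap\operatorname{supp}(x)$ all at once. Your own case split then says that checks lying before $z_j$ first see $E$ in round $r+1$, while checks lying after $z_j$ first see it in round $r$. Placing $D$ ``between two rounds,'' as you suggest, makes every $X$-check first see $D$ in one common round. Since $H_X D=H_X E$, any check $x$ with $x\cdot E=1$ whose first round of seeing $E$ differs from its first round of seeing $D$ flips in one round and flips back in another, firing two detectors. Under your pairwise reading of non-interleaving, the construction therefore fails as soon as checks of both kinds have odd overlap with $E$. The problem persists even in a left--right circuit, where all $Z$-CNOTs of a round precede its $X$-CNOTs. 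There, putting $D$ at the boundary after round $r$'s $X$-CNOTs makes $D$ visible only from round $r+1$, while $E$ is visible from round $r$, so the configuration is detected. The per-check all-or-nothing property you flagged as the main obstacle is necessary but not sufficient. What is actually needed is that, for every $X$-check, $E$ and $D$ become visible in the same round, and nothing in your argument enforces this.

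The missing ingredient is the one the paper's proof rests on: a global separation of the circuit into alternating $X$-type and $Z$-type parts. The paper obtains it by inserting error-free idles, which changes neither the fault set nor $d'_{\mathrm{circ}}$. In this separation, on every data qubit the $Z$-type CNOTs of the relevant round sit between the $X$-type CNOTs of consecutive parts, consistently across all checks. This uses the uniform form of non-interleaving in the paper's definition, not merely the pairwise one.

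Put the hook fault and all of the $D$ faults inside the same $Z$-part; for each qubit of $D$, that means after its preceding $X$-type CNOTs and before its following ones. Then every $X$-check sees either none or all of $E+D$, and it sees $E$ and $D$ for the first time in the same round, so $H_X(E+D)=0$ does imply that no detector fires. With that change, the rest of your argument goes through and coincides with the paper's. That includes the ancilla $Z$ commuting with its $Z$-measurement, no propagation of $Z$ errors onto $Z$-ancillas, and $E+D\in\ker(H_X)\setminus\operatorname{span}(H_Z)$ flipping an $X$-logical in the final perfect round.
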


\begin{proof}
Consider an SEC with non-interleaved $X$ and $Z$-type checks, and any $Z$-type residual error $E\in \mathcal{E_\mathrm{Z}}$. Then, by introducing additional error-free idling operations on data qubits in the circuit, the SEC can be temporally partitioned into $X$- and $Z$-type parts, during which the corresponding checks are implemented. Note that for any non-trivial SEC, each data qubit undergoes at least one error-prone operation in each partition.

Within a given $Z$-type partition, there exists a weight-1 hook error of $Z$-type that propagates \emph{solely} onto the data qubits $q\in E$ at the end of the partition. Supplementing this with $\Delta(E) - 1$ data qubit errors of $Z$-type at any point of the partition, one can complete a non-trivial logical operator (see \defn{residual_dist}). Because this occurs entirely within a single partition, no detectors are triggered, and the error configuration therefore constitutes a circuit-level logical error. Since the introduction of additional error-free idling operations cannot change the distance of the circuit, we obtain that $d'_{\mathrm{circ}}$ is at most $\Delta(E)$. The same argument applies for any $X$-type residual error $E\in \mathcal{E_\mathrm{X}}$.
\end{proof}


Next, we prove a more general statement that applies to any single-ancilla CNOT-based SEC and forms the backbone for the remaining part of the proof.
\begin{proposition}[Extended-code distance bound]\label{propos:circ_dist_bound}
    The circuit distance $d_{\mathrm{circ}}$ of any SEC with the residual error set $\mathcal{E}$ satisfies
    \begin{equation*}
        d_{\mathrm{circ}} \geq d_\mathrm{ext}(\mathcal{E}).
    \end{equation*}
\end{proposition}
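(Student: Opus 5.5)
We argue by contrapositive: take any circuit-level logical error of minimum weight $w = d_{\mathrm{circ}}$ in, say, the $Z$-basis memory experiment, and from it construct a bit string $x$ of Hamming weight at most $w$ with $H_X^{\mathrm{ext}}(\mathcal{E}_Z)x^{\mathsf T}=0$ and $L_X^{\mathrm{ext}}(\mathcal{E}_Z)x^{\mathsf T}\neq 0$. This gives $d^Z_{\mathrm{circ}}\ge d^Z_{\mathrm{ext}}(\mathcal{E}_Z)\ge d_{\mathrm{ext}}(\mathcal{E})$. The $X$-basis case is identical with $X\leftrightarrow Z$, and taking the minimum over both cases gives the proposition. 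Only $Z$-type (more precisely, $X$-detector-relevant) components of the faults matter for the $Z$ observable and the $X$-type detectors, so we may project each fault onto its $Z$-part and discard the rest.

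The key step is a local classification of single faults. For each error mechanism, we track how its $Z$-component propagates forward through the CNOT network to the final (perfect) round. Each single-ancilla CNOT-based gadget has a fixed structure.
\begin{enumerate}[label=(\roman*)]
\item A $Z$ fault on a data qubit at some time propagates to a $Z$ on that data qubit. Under a $Z$-check CNOT (data control), $Z$ stays on the control, and $X$-check CNOTs target data, so $Z$ on a target may copy onto the $X$-ancilla, flipping at most that check's measurement.
\item A $Z$ fault on a $Z$-check ancilla, which is the CNOT target, propagates through later CNOTs of that check back onto the later data qubits. This yields exactly a residual error $E_l\in\mathcal{E}_{Z,j}$, or the full check support, which is a stabiliser and hence trivial up to a time-like flip.
\item Faults on $X$-check ancillas, measurement flips and preparation flips only change measurement outcomes.
\item A two-qubit CNOT fault decomposes as a product of these cases.
\end{enumerate}
In every case, the net data-qubit $Z$-effect of one mechanism at the end of the circuit is either $0$, a single $e_k$, a residual error $E\in\mathcal{E}_Z$, or such an element plus a full stabiliser. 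Its contribution to the detectors is the corresponding change in ancilla outcomes.

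We then sum over the faults and telescope. Each fault $f$ is assigned the column of $H^{\mathrm{ext}}_X$ / $L^{\mathrm{ext}}_X$ matching its data-qubit residue: a qubit column, a residual column, or zero. Setting $x$ to be the mod-2 sum of these indicators gives $|x|\le w$. The final perfect round measures $H_X$ applied to the total data residue, while the intermediate detectors are differences of consecutive rounds. So the condition that no detector fires forces the accumulated residue $\rho$, which equals $\sum_f(\text{residue of }f)$ mod stabilisers, to satisfy $H_X\rho=0$. This is precisely $H_X^{\mathrm{ext}}x^{\mathsf T}=0$, since appended columns were defined as sums of $H_X$ columns. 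Likewise, a flipped logical observable means $L_X\rho\ne 0$, i.e.\ $L_X^{\mathrm{ext}}x^{\mathsf T}\neq0$. A residue differing by a full $Z$ stabiliser changes neither quantity, because of $H_XH_Z^{\mathsf T}=0$ and the fact that logicals commute with stabilisers.

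The main obstacle is the interleaved case. There, a $Z$-check ancilla fault mid-check may propagate onto data qubits that are subsequently touched by $X$-check CNOTs, and data $Z$ errors can spread onto $X$ ancillas. We must make sure these interactions only affect measurement records (detectors) and never enlarge the data-qubit residue beyond an element of $\mathcal{E}_Z$. To show this, we plan to use that CNOT conjugation maps $Z$ on a target to $Z\otimes Z$ on control and target. Since $X$-check CNOTs have data qubits as targets and $X$-ancillas as controls, propagation in interleaved circuits adds only $Z$'s on $X$-ancilla controls, and data supports are unchanged. We also need care that intermediate-round detectors cannot hide a nonzero syndrome $H_X\rho$. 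For this we plan a telescoping argument over rounds: the final-round detector compares against the last noisy round, and chaining back to the first round shows that the cumulative data syndrome must vanish when all detectors are silent.
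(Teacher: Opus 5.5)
Your proposal is correct and follows essentially the same route as the paper: project onto $Z$-type faults, use the perfect final round together with silent detectors to force the end-of-circuit data residue into $\ker(H_X)\setminus\operatorname{span}(H_Z)$, classify each mechanism's data footprint as $0$, a single qubit, or a residual error (the paper's table), and map the configuration to an extended-code bit string of no greater weight. Your telescoping over rounds and your full-stabiliser bookkeeping simply make explicit what the paper leaves implicit. Two bits of wording need tightening: in item (iv), a $Z\otimes Z$ fault after the $l$-th CNOT of a $Z$ check must be assigned the single column $E_l=e_{i_l}+E_{l+1}$ (the observation in the paper's proof figure) rather than two separate columns, and data-qubit preparation and measurement flips should be counted as single-qubit columns, not as pure outcome flips.
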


\begin{figure}[tbp]
    \centering
    \includegraphics[width = 0.9\linewidth]{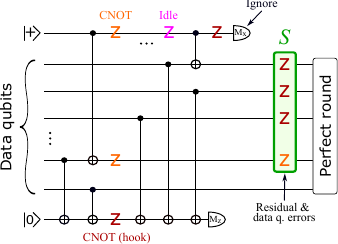}
    \caption{A guide for the proof of \propos{circ_dist_bound}. Regardless of the CNOT network, a non-trivial circuit-level logical operator must have a support $S \in \ker (H_X) \setminus \operatorname{span}(H_Z)$ on data qubits at the end of the circuit. This support only contains singular data qubit errors or multi-qubit residual errors. Note that they may overlap and partially cancel out each other. Finally, note that the indicated hook error (red) is equivalent to a two-qubit $ZZ$ error after the subsequent CNOT. Hence, any $Z$-type CNOT error on data/$Z$-ancilla pair is equivalent to either a single-qubit hook or data qubit error (this was also observed in~\cite{shaw2025lowering}).
    }
    \label{fig:proof_fig}
\end{figure}

\begin{proof}
We present the proof in several steps. For guidance, refer to \fig{proof_fig} throughout.

1. \textit{CSS code.} Since we assume a CSS code, we can consider $X$- and $Z$-type errors, and therefore, $X$-type circuit distance ($d^X_{\mathrm{circ}}$) and $Z$-type circuit distance ($d^Z_{\mathrm{circ}}$) separately. 
To see this, consider pure $X$-type circuit-level errors. These errors have no effect on any $X$-type check or any $X$-type logical observable measured at the end of the circuit, due to the circuit having no gates that map $X \rightarrow Z, Y$ and due to $X$-type errors commuting with all $X$-type observables. Hence, for the purposes of lower bounding $d^Z_{\mathrm{circ}}$ we can ignore all $X$-type errors. Furthermore, this allows us to model $Y$-type errors (up to phase) as $Y = XZ \equiv Z$, and therefore, all $Y$-type errors are already accounted for by only considering $Z$-type errors. The same applies to mixed type two-qubit errors -- any $X$-component of the error can be ignored. In conclusion, one can consider only $Z$-type errors for $X$-type observables (for the purpose of evaluating $d^Z_{\mathrm{circ}}$) and vice-versa. For simplicity, we will only consider $Z$-type errors and $d^Z_{\mathrm{circ}}$, the opposite case with $X$-type errors and $d^X_{\mathrm{circ}}$ follows analogously.

2. \textit{Perfect final round of SEC.} In our setting, the final round of syndrome extraction is assumed to be perfect, which, in practice, manifests as individually measuring data qubits in the basis of interest and then reconstructing the syndrome. Since any circuit-level logical operator is defined as triggering no detectors, any \emph{non-trivial} circuit-level logical error $L_{\mathrm{circ}}$, at the end of the circuit, must have its support on the data qubits $S = \mathrm{supp}_{\mathrm{data}}(L_{\mathrm{circ}})\in\mathbb{F}_2^n$, such that $S \in \ker (H_X) \setminus \operatorname{span}(H_Z)$ (this includes errors from the final data qubit measurements). This means that every circuit-level logical error propagated onto data qubits at the end of the circuit (after data qubit measurements) has an equivalent logical error in the code-capacity model. Therefore, to bound $d^Z_{\mathrm{circ}}$ we must find the lowest number of circuit-level errors that lead to a logical error supported on data qubits at the end of the computation.

3. \textit{Noisy operations.} Let us now consider all error mechanisms that can occur during multiple rounds of SEC, and their resulting support on data qubits after propagating them to the end of the circuit. For convenience, we have collected them in~\tab{example}.

\begin{table}[ht]
\centering
\begin{tabular}{c|c}
\textbf{Error mechanism} & \textbf{\# of data qubit errors} \\ \hline
Init error (data) & 1 \\ 
Init error (any ancilla) & 0 \\ 
Meas error (data) & 1 \\ 
Meas error (any ancilla) & 0 \\ 
CNOT error (Z ancilla/data) & Resp. residual error or 1 \\ 
CNOT error (X ancilla/data) & 0 or 1 \\
Idling error (data) & 1 \\ 
Idling error (X ancilla) & 0 \\
Idling error (Z ancilla) & Resp. residual error \\
\end{tabular}
\label{table:example}
\caption{Types of circuit error mechanisms and their respective support on data qubits at the end of the circuit.}
\end{table}
Note that any $Z$-type error on data qubits can propagate onto an $X$-type ancilla but due to the CNOT directionality, never back onto the data qubits. For the same reason, no $Z$-type error can propagate onto a $Z$-type ancilla at any point of the circuit. Note that this does not depend on whether the checks are interleaved.

4. \textit{Extended-code distance.} We have just shown that single-qubit data errors together with all residual errors capture every potential way to form an error supported on the data qubits at the end of the circuit. This is exactly modelled by the extended-code with $\mathcal{R} = \mathcal{E}_Z$ (see \defn{ext_code}), where each column of $H_X^\mathrm{ext}(\mathcal{E}_Z)$ and $L_X^\mathrm{ext}(\mathcal{E}_Z)$ corresponds to either a single data qubit or a residual error. Therefore, with respect to errors appearing on data qubits, we have a natural mapping from every circuit-level error mechanism to a column of the extended-code model (if the error mechanisms leads to 0 data qubit errors, then it is not mapped to any column). By extension, we can map every circuit-level error configuration $E_{\mathrm{circ}}$ composed of potentially multiple error mechanisms to a bit-string $x$ of the extended-code model.
If at the end of the circuit, a given $E_{\mathrm{circ}}$ has a support $S$ on data qubits, such that $S \in \ker (H_X) \setminus \operatorname{span}(H_Z)$, then its respective bit string $x$ satisfies \eq{ext_code_dist} (by \defn{ext_code_dist}). Hence, $E_{\mathrm{circ}}$ must be composed of at least $d^Z_\mathrm{ext}(\mathcal{E}_Z)$ error mechanisms. Since we already established that for any non-trivial circuit-level logical error $L_{\mathrm{circ}}$ its data qubit support must satisfy $S \in \ker (H_X) \!\setminus\! \operatorname{span}(H_Z)$, we conclude $d^Z_{\mathrm{circ}} \geq d^Z_\mathrm{ext}$. 

Following the same four steps, we can similarly derive that $d^X_{\mathrm{circ}} \geq d^X_\mathrm{ext}$. We conclude the proof by combining distances together using \eq{ext_dist_combined} and $d_\mathrm{circ} = \min (d^X_\mathrm{circ}, d^Z_\mathrm{circ})$.
\end{proof}

The reason we obtain an inequality (and not equality) in \propos{circ_dist_bound} is due to the fact that error propagation happens over multiple time steps. Before the errors have spread onto the data qubits to form an undetectable logical operator, parts of them may have been detected. One way to avoid this detection is to consider non-interleaved circuits.

\begin{corollary}\label{corol:rest_dist}
    The circuit distance $d'_{\mathrm{circ}}$ of any non-interleaved SEC with the residual error set $\mathcal{E}$ satisfies
    \begin{align*}
        d'_{\mathrm{circ}} &\le d_{\mathrm{ext}}(\mathcal{R}) \quad \forall \mathcal{R} \subseteq \mathcal{E}, \\
        d'_{\mathrm{circ}} &= d_{\mathrm{ext}}(\mathcal{E}).
    \end{align*}
\end{corollary}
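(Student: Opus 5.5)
Since \propos{circ_dist_bound} already gives $d'_{\mathrm{circ}} \ge d_{\mathrm{ext}}(\mathcal{E})$ for any SEC, and the extended-code distance is monotone non-increasing in $\mathcal{R}$, the plan reduces both claims to a single upper bound: $d'_{\mathrm{circ}} \le d_{\mathrm{ext}}(\mathcal{R})$ for every $\mathcal{R}\subseteq\mathcal{E}$. Taking $\mathcal{R}=\mathcal{E}$ then yields the equality. Monotonicity holds because enlarging $\mathcal{R}$ only appends columns to $H_X^{\mathrm{ext}}$ and $L_X^{\mathrm{ext}}$. Any solution $x$ of \eq{ext_code_dist} for $\mathcal{R}$, padded with zeros, is therefore a solution for $\mathcal{R}'\supseteq\mathcal{R}$, so $d_{\mathrm{ext}}(\mathcal{R}')\le d_{\mathrm{ext}}(\mathcal{R})$. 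It is consequently enough to prove $d'_{\mathrm{circ}}\le d_{\mathrm{ext}}(\mathcal{E})$. By \eq{ext_dist_combined} we treat the $Z$-type case; the $X$-type case is symmetric.

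To show $d^{Z\prime}_{\mathrm{circ}}\le d^Z_{\mathrm{ext}}(\mathcal{E}_Z)$, take a minimum-weight $x$ satisfying \eq{ext_code_dist}. Its support is a set of single data-qubit columns $D$ together with a set of residual columns $\{E_1,\dots,E_k\}\subseteq\mathcal{E}_Z$, with $\sum_i E_i + D \in\ker(H_X)\setminus\operatorname{span}(H_Z)$ as a data-qubit vector, since the logical column is nonzero. We realise this as a circuit-level error of exactly $|x|$ mechanisms, following the construction in \propos{res_dist_bound}. First insert error-free idles so that the non-interleaved circuit separates temporally into an $X$-part and a $Z$-part within a single round. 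Then, inside one $Z$-part, place for each $E_i$ a single $Z$ error on the corresponding $Z$-ancilla right after the appropriate CNOT, so that it propagates exactly onto $E_i$. For each $q\in D$, place a single $Z$ data error, for instance on an idle or CNOT location of $q$ within that $Z$-part.

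The main point to verify is that this configuration triggers no detectors and flips a logical observable. The $Z$-ancilla hook errors propagate only forward onto data qubits and, by design, do not flip the $Z$-ancilla measurement outcome. No $Z$ error is seen by $Z$-type checks at all. The $X$-type checks of the previous round have already completed their CNOTs, and the $X$-checks of the current round have not yet started them, thanks to the non-interleaved structure. By the end of the $Z$-part, therefore, the data carries the error $\sum_i E_i+D$, which lies in $\ker(H_X)$, and every later $X$-check measurement, including the perfect final round, sees trivial syndrome. Since the error is not in $\operatorname{span}(H_Z)$, it flips an $X$-basis logical observable.

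A subtlety is that several residuals may come from the same check, and a hook error and a data error may share a location. Each mechanism is, however, an independent column, so the configuration still uses at most $|x|$ mechanisms; any cancellations only lower the count and so still give an upper bound. The step I expect to need the most care is justifying that the hook errors from different $Z$-checks, all placed within the same $Z$-part, cause no intermediate detection. I would argue this directly from the non-interleaved ordering: no $X$-check CNOT on an overlapping qubit sits between the hook location and the end of the $Z$-part. Finally, inserting error-free idles does not change $d'_{\mathrm{circ}}$. Combining with \propos{circ_dist_bound} gives both statements.
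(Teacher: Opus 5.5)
Your proposal is correct and follows essentially the paper's argument: insert error-free idles to isolate a $Z$-type partition, realise each column in the support of a logical $x$ of the extended code by a single hook or data-qubit error inside that partition so that no detector fires, and obtain the reverse inequality from \propos{circ_dist_bound}. The only differences are cosmetic. You first reduce to $\mathcal{R}=\mathcal{E}$ using monotonicity of $d_{\mathrm{ext}}$ under appending columns, whereas the paper applies the construction to an arbitrary $\mathcal{R}$ directly. You also correctly cite \propos{circ_dist_bound}, rather than \propos{res_dist_bound} as the paper's text does, for the lower bound.
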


\begin{proof}
The proof is conceptually similar to that of \propos{res_dist_bound}. Consider an SEC with non-interleaved $X$ and $Z$-type checks. Then, by introducing additional error-free idling operations on data qubits in the circuit, the SEC can be temporally partitioned into $X$- and $Z$-type parts, during which the corresponding checks are implemented.

Within a given $Z$-type partition, every possible $Z$-type data qubit and residual error can appear; and at the end of the partition, the error configuration has support only on data qubits. Therefore, for any weight-$w$ bit string $x$ satisfying~\eq{ext_code_dist} for any residual error subset $\mathcal{R} \subseteq \mathcal{E}$, there exists a weight-$w$ error configuration of only $Z$-type errors within that partition that induces a logical error on data qubits in the code capacity setting, i.e. $L \in \ker (H_X) \setminus \operatorname{span}(H_Z)$. Because this occurs entirely within a single partition, no detectors are triggered, and the error configuration therefore constitutes a circuit-level logical error. Since the introduction of additional error-free idling operations cannot change the distance of the circuit, we obtain that $d'_{\mathrm{circ}}$ is at most $d_{\mathrm{ext}}(\mathcal{R})$. 
An analogous argument applies to $X$-type errors and their corresponding partition.
Finally, combining this result with \propos{res_dist_bound}, we obtain the equality $d'_{\mathrm{circ}} = d_{\mathrm{ext}}(\mathcal{R})$ for the case when $\mathcal{R} =\mathcal{E}$.
\end{proof}

\noindent The final equation of \thm{circ_dist_bounds} follows from combining the results of \propos{res_dist_bound} and \corol{rest_dist}, and with that we finish the proof. 

Recall that LRCs are non-interleaved SECs, and therefore according to \thm{circ_dist_bounds}, for any LRC with the residual error set $\mathcal{E}$ the following holds
\begin{equation}\nonumber
d_{\mathrm{circ}} = d_{\mathrm{ext}}(\mathcal{E}) \le d_{\mathrm{ext}}(\mathcal{R}) \le \Delta(E)
\qquad
\forall E\in\mathcal{R} \subseteq\mathcal{E},
\end{equation}
where we have used that adding additional residual errors for the extended code cannot increase its distance, and that $d_{\mathrm{ext}}(\{E\})\leq \Delta(E)$. This inequality follows from the fact that an extended code of a single variable $d_{\mathrm{ext}}(\{E\})$ includes the case in which the residual error is fixed, that is, it includes the setting of $\Delta(E)$.
As part of this work, we extensively use the above (in)equality chain for a rapid construction of high-performance LRCs.

\subsection{Application to the gross code}

The gross code is a promising example of a relatively small quantum LDPC code~\cite{bravyiEtAl2024,yoder2025tourgrossmodularquantum}, with parameters $[[144,12,12]]$ with weight-six stabiliser checks (see \fig{gross-code} in \app{gross-analysis}). 
Although the code distance is $12$, the best currently-known SECs for the gross code achieve a circuit distance of at most $10$, making it a natural candidate for improved SEC design.

We use \thm{circ_dist_bounds} to prove that for non-interleaved single-ancilla circuits for the gross code:
(i) no circuit achieves circuit distance $12$;
(ii) no uniformly tiled circuit achieves circuit distance $11$; and
(iii) separating checks into 3 colours and tiling a different circuit (each with a specific set of residuals) for each colour leaves very few candidates for circuit distance 11, which are highly constrained.
We use the remaining candidate residual pattern to infer an explicit circuit, which is a slightly modified LRC.
This circuit has depth $8$, which can be shown to be minimal using \eq{depth-bound}, which matches the depth of the original circuit in Ref.~\cite{bravyiEtAl2024}. 
Using extensive testing, we conjecture that it has circuit distance 11.

Here, we provide the formal proposition and proof of (i), deferring the proof and propositions of (ii) and (iii) to \app{gross-analysis}.

\begin{proposition}[Gross circuits]
\label{propos:all-circuits-bound}
Let $d_{\mathrm{circ}}$ be the circuit distance of any non-interleaved SEC for the gross code. Then,
\begin{equation}
d_{\mathrm{circ}} \ \le\ 11.
\end{equation}
\end{proposition}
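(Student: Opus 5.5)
We will use the first inequality of \thm{circ_dist_bounds}, namely $d_{\mathrm{circ}} \le \Delta(E)$ for every residual error $E \in \mathcal{E}$ of a non-interleaved SEC. Where that alone does not suffice, we will use the second inequality, $d_{\mathrm{circ}} \le d_{\mathrm{ext}}(\mathcal{R})$ for suitable $\mathcal{R}$.

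The goal is to show that whatever CNOT ordering is chosen for a single check, some residual error produced by that ordering can be completed to a non-trivial logical operator with at most $10$ further single-qubit errors. That gives $\Delta(E) \le 11$. By \propos{res_dist_bound} this bounds every non-interleaved SEC, independent of how the other checks are scheduled.

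The key steps, in order, are as follows.

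\textbf{Reduce by symmetry.} The gross code is a bivariate bicycle code whose checks are translates of one another under the $\mathbb{Z}_{12}\times\mathbb{Z}_6$ group action. The code also has an $X\leftrightarrow Z$ duality. So it suffices to fix a single $Z$-check, say the one at the origin, with weight-$6$ support $\{q_1,\dots,q_6\}$.

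\textbf{Enumerate orderings.} The CNOT order on this check is one of $6!=720$ permutations. Each ordering $\pi$ yields residual errors $E_l=\sum_{s\ge l} e_{q_{\pi(s)}}$ for $l=2,\dots,6$. These are suffix sets of sizes $1$ through $5$.

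\textbf{Focus on the size-$2$ suffix.} A weight-$2$ residual lying inside a minimum-weight ($12$) logical operator gives $\Delta \le 1+10 = 11$. Every ordering produces a size-$2$ suffix $\{q_{\pi(5)},q_{\pi(6)}\}$. It therefore suffices to show that for every one of the $\binom{6}{2}=15$ pairs of qubits in the check, some weight-$12$ $Z$-logical contains that pair. If that holds, every ordering has $\Delta(E_5)\le 11$ and the proposition follows.

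\textbf{Certify the pairs.} For each pair, we exhibit such a logical explicitly. Equivalently, we compute $\min_D |D|$ with $E+D \in \ker(H_X)\setminus\operatorname{span}(H_Z)$, using an integer-program or SAT-based distance solver on the $[[144,12,12]]$ code. Symmetries of the check support (automorphisms of the code fixing the check) may reduce the $15$ pairs to a few orbits.

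\textbf{Fallback.} If some pair fails, meaning its minimal completion needs $11$ extra qubits, we would examine orderings in which that pair is the final two. For these we would bound a longer suffix $E_4$ or $E_3$, which needs a logical containing $3$ or $4$ of the check's qubits with complement weight $\le 10$. Failing that, we would use $d_{\mathrm{ext}}(\mathcal{R})$ with $\mathcal{R}$ the residuals of two neighbouring checks, via the second inequality of \thm{circ_dist_bounds}.

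The main obstacle is the certification step: guaranteeing that minimum-weight logicals of the gross code cover every pair of qubits within a check. This is a computational fact about the code rather than a structural one. We would first try to find, for each pair orbit, an explicit weight-$12$ logical among the known translates of the standard minimum-weight logicals, and resort to an exhaustive solver only for any orbits left uncovered.
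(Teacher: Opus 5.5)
Your plan is sound, but it takes a different route from the paper. The paper fixes one check and enumerates the $90$ inequivalent single-check residual sets $\mathcal{R}_i$. It then exhibits, by search, an extended-code logical of weight at most $11$ for each, giving $d_{\mathrm{ext}}(\mathcal{R}_i)\le 11$. You instead apply the residual-distance bound $d'_{\mathrm{circ}}\le\Delta(E)$ one residual at a time.

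For this statement the two routes are equivalent. Take $d=12$ and a weight-$6$ check, and consider any extended-code codeword that uses two of the check's three nontrivial residual columns. Modulo the check, its effective data error has weight at most $2$, so it needs at least $10$ further qubits and has total weight at least $12$. A codeword using all three columns has effective weight $3$, needs at least $9$ further qubits, and also has weight at least $12$. Hence $d_{\mathrm{ext}}(\mathcal{R}_i)\le 11$ holds exactly when some single residual in $\mathcal{R}_i$ has $\Delta\le 11$. Each weight-$\le 11$ logical the paper finds therefore directly certifies $\Delta(E)\le 11$ for one residual of that ordering, which is exactly what your argument needs.

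The weak point is your primary target. Requiring all $15$ pairs of the check to lie in weight-$12$ logicals is stronger than necessary. The paper does not establish it, and nothing guarantees it. You missed that $E_3=\{q_{\pi(3)},\dots,q_{\pi(6)}\}$ is stabiliser-equivalent, via the check itself, to the first pair $\{q_{\pi(1)},q_{\pi(2)}\}$. So every ordering supplies two disjoint pair residuals plus the triple $E_4$, and you only need one of these three to have $\Delta\le 11$. That is precisely your first fallback, and by the equivalence above it is guaranteed to succeed. The second fallback, extended codes over two neighbouring checks, is unnecessary here.

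As for the trade-off: your formulation is more transparent and only requires single-residual completion problems. The paper's $d_{\mathrm{ext}}$ formulation is the one that carries over to the tiled-circuit propositions in the appendix. There, residual columns from different checks genuinely combine, and per-residual bounds no longer suffice.
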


\begin{proof}
Our strategy is to fix a single check of the code and enumerate all possible sets of residual errors that can arise from that check alone under any CNOT ordering. 
Any non-interleaved SEC must have a residual-error set containing one of these enumerated sets as a subset. 
Therefore, by \thm{circ_dist_bounds}, any extended-code distance bound that holds for all of the enumerated sets also holds for the circuit distance for all non-interleaved SECs.

Each check of the gross code has weight six, and in any non-interleaved SEC each check contributes exactly three residual errors of weight greater than one. 
The resulting residual-error sets depend only on the relative ordering of the CNOTs used to measure the check, yielding exactly $90$ possibilities which can be generated as follows. 
Label the six qubits in the support of a check by $1,\ldots,6$, and suppose the CNOTs are applied in the order $(1,2,3,4,5,6)$. 
The three residual errors contributed to $\mathcal{E}$ are then supported on
\[
\bigl\{ \{1,2\},\ \{4,5,6\},\ \{5,6\} \bigr\}.
\]
(Note that $\{1,2,3,4\}$ is stabiliser-equivalent to $\{5,6\}$, etc.) Permuting the ordering $(1,2,3,4,5,6)$ produces all $90$ inequivalent residual-error sets.

For concreteness, fix a specific $X$ check of the gross code (by symmetry it does not matter which).
Let $\mathcal{R}_i$ for $i=1,\dots,90$ denote the three errors of the residual-error set contributed by that single check for the $i$th of the $90$ possible orderings. 
Any non-interleaved SEC must have a residual-error set $\mathcal{E}$ that contains one of these sets $\mathcal{R}_i$ as a subset. 
Applying the upper bound of \thm{circ_dist_bounds} therefore gives
\[
d_{\mathrm{circ}} \le \max_i d_{\mathrm{ext}}(\mathcal{R}_i).
\]
For each $\mathcal{R}_i$, we upper-bound $d_{\mathrm{ext}}(\mathcal{R}_i)$ by searching for a logical operator in the corresponding extended code, as described in \app{bp-osd-upper-bounds}. 
In every case we find a logical operator of weight at most $11$ (45 of the 90 cases have weight at most $10$), implying $d_{\mathrm{circ}} \le 11$ as required.
\end{proof}

Having shown that circuit distance 12 is unattainable for non-interleaved circuits, in \app{gross-analysis} we use similar techniques, together with additional structure of the gross code, to prove further restrictions and ultimately identify a circuit we conjecture has circuit distance 11. 
The key observation is that by looking at residual error sets of a growing set of checks can allow one to prune the search space rapidly.
For example, in the proof above, 45 of the 90 possible residual-error sets for the check considered already imply circuit distance at most $10$, showing that any SEC containing one of these sets cannot achieve circuit distance $11$.

Finally, while we have ruled out circuit distance $12$ for non-interleaved circuits of the gross code (which include left--right circuits as a special case), our proof does not apply to interleaved circuits. 
It therefore remains an open question whether any interleaved circuit can achieve circuit distance $12$.

\section{Circuit design framework}
\label{sec:framework}

\begin{figure*}[tbp]
    \centering
    \includegraphics[width =0.98\linewidth]{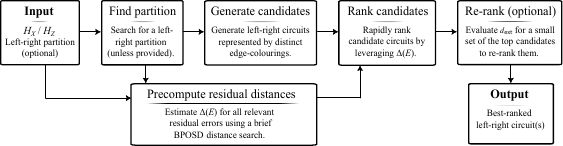}
    \caption{
    The framework workflow for constructing high-performance left-right circuits.
    }
    \label{fig:pipeline}
\end{figure*}

In this section, we detail the workflow of our framework for constructing high-performance left--right circuits. 
We begin with a brief description of each stage, and then describe in detail the methodology used to rank candidate circuits.

\subsection{Framework workflow}

We summarise the framework workflow in \fig{pipeline}. 

\textbf{Input and output.}
The framework requires the parity-check matrices $H_X$ and $H_Z$ of a CSS code, and can optionally be provided with a left--right partition if available (a natural choice often exists for structured code families).
The output of the framework is an explicit optimised left-right circuit for the code (or optionally a ranked list of left-right circuits).

\textbf{Find partition.}
If no partition is provided explicitly, a search-based approach is used in which the columns of $H_X$ and $H_Z$ are simultaneously permuted to generate a set of new left--right partitions $\{L_X, R_X, L_Z, R_Z\}$. 
The degree of each of these Tanner subgraphs sets the number of colours in its minimal colouring, which then sets the depth of all candidate circuits we generate later. 
This allows us to estimate the resulting circuit depth prior to the colouring stage, enabling rapid filtering of unfavourable partitions. 
In practice, well-chosen partitions lead to low-depth circuits.

\textbf{Precompute residual distances.} 
Given the left--right circuit structure, only a limited subset of all potential residual errors need to be considered for a chosen partition. 
For example, if a given check has $3$ left and $3$ right data qubits, it cannot have a residual error consisting of only $2$ left and $2$ right qubits for any LRC.
For computing residual distances, we use the method described in \app{bp-osd-upper-bounds} to perform a rapid and lightweight residual distance estimate.
Since the task of computing $\Delta(E)$ is based on the code-capacity model, for smaller codes, more accurate but computationally more expensive methods (e.g. mixed-integer programming) can be used instead.

\textbf{Generate candidates.}
Recall that left--right circuits are specified by edge colourings of the Tanner subgraphs $L_X, R_X, L_Z,$ and $R_Z$, where each colour specifies the time step of the corresponding CNOT (see \sect{LRC}). 
A minimal edge colouring for each graph can be constructed efficiently using an algorithm based on Hopcroft--Karp~\cite{hopcroft1973matching}, although in practice the runtime is not negligible.
This produces a left-right circuit with the lowest possible depth given the specified partition.
One approach to generating many candidate circuits is to introduce randomness into the colouring algorithm and run it multiple times. 
Instead, in our framework we construct a single colouring and use it to generate additional candidates by permuting the colours within each subgraph, which does not change the circuit depth. 
For many quantum LDPC codes the resulting permutation space is manageable and all permutations can be considered. 
However, for codes with large qubit or stabiliser degrees it may be necessary to restrict the number of permutations explored, as this space grows factorially with the number of colours used.

\textbf{Ranking and optional re-ranking.}
The candidate circuits are ranked as described in the next subsection, forming the output of the framework. 
The initial ranking leverages the precomputed residual distances and uses only metrics which are very fast to evaluate for each circuit.
An optional re-ranking makes use of a more expensive metric which can be applied to a small subset of promising candidate circuits.

\subsection{Ranking candidate circuits}

Here we first state several metrics that can be evaluated for any SEC $\mathcal{C}$ and then describe how they are used to rank candidate circuits in our framework.

For a given circuit $\mathcal{C}$, the residual error set $\mathcal{E}(\mathcal{C}) = \mathcal{E}_X(\mathcal{C}) \sqcup \mathcal{E}_Z(\mathcal{C})$ is easily determined, and the residual distance $\Delta(E)$ for each $E \in \mathcal{E}(\mathcal{C})$ is obtained by lookup, since estimates for all possible residual errors have been precomputed.

\textbf{Min residual distance.}
The simplest metric is the smallest residual distance present in the circuit,
\[
\Delta_\text{min}(\mathcal{C}) := \min_{E\in\mathcal{E}(\mathcal{C})} \Delta(E).
\]
This captures the effect of the most harmful residual error when considered individually and is very fast to compute using the precomputed residual distances.

\textbf{Residual distance profile.}
Using the same precomputed distances, a more detailed profile of residual distances is
\[
W(\mathcal{C}) := (\gamma_1, \gamma_2, \dots),
\]
where $\gamma_u$ counts the number of residual errors with residual distance $u$. 
The index of the first nonzero entry of $W(\mathcal{C})$ is therefore $\Delta_\text{min}(\mathcal{C})$, while the magnitude of this entry indicates how many residual errors attain this limiting distance. 
Subsequent entries record the multiplicities of larger residual distances, and therefore, capture additional information about potentially less harmful residual errors. 
In practice, the vector $W(\mathcal{C})$ can be truncated after a small number of non-zero entries, since only the residuals with the lowest few residual distances significantly influence the circuit performance.

\textbf{Extended-code distance.}
The extended-code distance $d_\text{ext}(\mathcal{C})$ captures interactions between multiple residual errors which are missed by the residual-distance metrics.
However, it requires estimating the extended-code distance for each circuit (see \app{bp-osd-upper-bounds}) without exploiting the precomputed residual distances, making it much slower to evaluate than $\Delta_\text{min}(\mathcal{C})$ or $W(\mathcal{C})$.

\textbf{Ancilla idle count.}
This metric $\tau_A(\mathcal{C})$ is defined as the total number of idle operations summed over all ancilla qubits and all time steps in a round of the SEC $\mathcal{C}$.
It can be computed quickly given the circuit.

Reducing $\tau_A(\mathcal{C})$ is preferred as it lowers the number of potential error locations. 
Data-qubit idling is not included because, once the checks are fixed, the circuit depth determines the total number of data-qubit idle operations, and all candidates in our framework have the same depth.
In contrast, the ancilla idle count depends on the ordering of CNOT gates within each check, since preparation can be delayed until the first CNOT and measurement advanced to immediately follow the last CNOT on the ancilla, thereby reducing $\tau_A(\mathcal{C})$ (see \fig{ancilla_idling}).
The variation in $\tau_A(\mathcal{C})$ is particularly relevant for codes with heterogeneous check weights, or when the maximum check weight is smaller than the maximum data-qubit degree, i.e., when $r(H_{XZ}) < c(H_{XZ})$.

\begin{figure}[tbp]
    \centering
    \includegraphics[width = 1\linewidth]{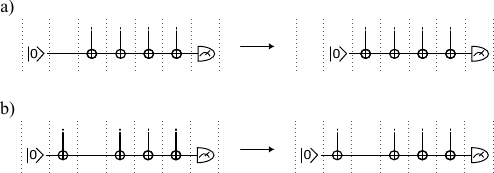}
    \caption{(a) Idle operations immediately after ancilla preparation or just before measurement can be removed by delaying preparation or advancing measurement, reducing the idle count. 
    (b) Idle operations between CNOT gates generally cannot be removed without modifying the circuit.}
    \label{fig:ancilla_idling}
\end{figure}

\textbf{Ranking rules.}
Using the above metrics, candidate LRCs arising from different edge colourings of $L_X$, $R_X$, $L_Z$, and $R_Z$ are ranked according to the following criteria, applied in order:
\begin{enumerate}
    \item \textbf{Min residual distance, $\Delta_\text{min}$:} circuits with larger $\Delta_\text{min}$ are ranked higher.
    \item \textbf{Ancilla idle count, $\tau_A$:} if $\Delta_\text{min}$ is equal, circuits with smaller $\tau_A$ are ranked higher.
    \item \textbf{Residual distance profile, $W$:} if both previous values are equal, circuits whose residual distance profile is lexicographically smaller are ranked higher.
\end{enumerate}
If all three criteria coincide, ties are broken using a fixed ordering of the candidate circuits. 
This defines a total ordering over the set of circuits being compared.

These rules reflect the following considerations. 
The circuit distance $d_{\mathrm{circ}}$ is upper bounded by the minimum residual distance, $d_{\mathrm{circ}} \le \Delta_\text{min}(\mathcal{C})$, so maximising $\Delta_\text{min}(\mathcal{C})$ favours circuits with larger potential distance. 
Minimising $\tau_A(\mathcal{C})$ reduces the number of idle locations and therefore the exposure of ancilla qubits to noise. 
Finally, the residual distance profile $W(\mathcal{C})$ captures the multiplicity of residual errors at each distance, and prioritising smaller profiles reduces the number of locations where harmful hook errors may arise. 
Although circuit-level error propagation is more complex than this simplified picture, these metrics provide an effective heuristic in practice (see \app{Residual_analysis}), which can be evaluated inexpensively.

For the top-ranked candidates, the extended-code distance $d_{\mathrm{ext}}$ can be optionally computed to estimate the circuit distance (see \thm{circ_dist_bounds}). 
Circuits with larger estimated $d_{\mathrm{ext}}$ can then be promoted in the ranking.

\section{Example applications}
\label{sec:numerics}

\begin{table*}[tbp]
\centering

\setlength{\tabcolsep}{6pt}
\renewcommand{\arraystretch}{1.2}
\begin{tabular}{lcccccc}
\hline
\textbf{Label} & \textbf{Code class} & \textbf{Parameters} & \textbf{Max degree} & \textbf{Natural L/R} & \textbf{LRC depth} & \textbf{Reference} \\
\hline
HGP625 & Hypergraph product & $[[625,25,8]]$   & 8 & Y &  10 &   \cite{TremblayEtAl2022,kang2025quantum} \\
Tanner200 & Quantum Tanner & $[[200,10,10]]$  & 12 & N & 18 & \cite{radebold2025explicit} \\
Haah128 & Haah's cubic & $[[128,14,8]]$   & 8 & Y & 10 & \cite{haah2011fractal} \\
FB126 & Fibre bundle & $[[126,8,9]]$    & 6 & Y & 8 & \cite{hastings2021fiberbundlecodes}* \\
\hline
\end{tabular}
\caption{Example codes considered under our framework. We also indicate whether the code has a natural left--right partition in the column `Natural L/R'. FB126 is constructed following the methods in \cite{hastings2021fiberbundlecodes}, however, to our knowledge, the code is novel.
}
\label{table:qec_codes}
\end{table*}

To demonstrate our circuit design framework in practice, we apply it to diverse classes of quantum LDPC codes. Specifically, we consider a hypergraph product (HGP) code with parameters \([[625,25,8]]\), constructed in~\cite{TremblayEtAl2022,kang2025quantum}; a quantum Tanner code \([[200,10,10]]\) introduced in~\cite{radebold2025explicit}; a Haah's cubic code \([[128, 14, 8]]\)~\cite{haah2011fractal}; and a fibre bundle (FB) code \([[126,8,9]]\), which we construct by following the methods in~\cite{hastings2021fiberbundlecodes}. We refer to these codes as HGP625, Tanner200, Haah128, and FB126, respectively. See \tab{qec_codes} for summary.

These codes possess complementary characteristics that make them well suited for benchmarking our approach. HGP625, Haah128 and FB126 exhibit a natural two-block structure in their parity-check matrices $H_X$ and $H_Z$, providing a natural left--right partition. To our knowledge, Tanner200 lacks such partitioning. It is also known that no single-ancilla CNOT-based SEC can reduce the circuit distance $d_{\mathrm{circ}}$ of HGP codes~\cite{Manes2025distancepreserving}. As a result, HGP625 serves as an ideal testbed for studying circuit depth and min-weight failure multiplicity $N_{\mathrm{fail}}$ under fixed $d_{\mathrm{circ}}$. Finally, Tanner200 features checks with widely varying and large weights, ranging from $6$ to $12$. Hence, Tanner200 should have large benefits from ancilla idling time optimisation.

For each code, we construct a high-performance LRC following the framework described in \sect{framework}. The partition stage differs based on whether the code has a natural left--right partition or not. For HGP625, Haah128 and FB126, we directly exploit their two-block structure, which yields near optimal depth of the corresponding LRCs. For Tanner200, we instead search for a suitable partition by considering many column permutations of the check matrices. This procedure identifies a partition with $t_1 = 9$ and $t_2 = 7$, resulting in an LRC depth of $t = t_1 + t_2 + 2 = 18$. Although this is far from the optimal depth $t_{\mathrm{min}} = 12$, it improves upon the naive circuit depth of $r(H_X) + r(H_Z) = 21$ derived from~\cite{TremblayEtAl2022} when allowing for overlapping SEC rounds.

Given the broad range of check weights in Tanner200, we additionally employ a modified edge-colouring algorithm based on linear sum assignment problem~\cite{Crouse2016assignment} that concentrates colour indices. Specifically, the algorithm produces an optimal edge-colouring in which colour $0$ is assigned to as many edges as possible, followed by colour $1$, and so on. This polynomial-time algorithm yields a more effective optimisation of the ancilla idling time $\tau_A(\mathcal{C})$ in the ranking stage of the framework. All remaining steps of the workflow are identical across the four codes. 
For the given example codes, we observed very small improvements when adding the optional step of evaluating the full extended-code distances of the top $r=5$ CNOT sequences.
While we did not employ this additional filtering step in the numerical analyses presented here, we believe it may prove useful for other codes or more complex circuit constructions.

After selecting the highest ranked CNOT sequence as deemed by the heuristic, we construct a noisy \texttt{Stim}~\cite{Gidney_2021} circuit according to \fig{lr-circuit}, with $d$ rounds of syndrome extraction. The resulting circuit is converted into a detector error model~\cite{Derks2025designingfault}, from which errors are sampled and decoded using BPOSD~\cite{roffe2020}. Full simulation details are provided in \app{sim-details}, and the results are plotted in \fig{data}.

\subsection{Comparison to other SEC constructors}

For comparison purposes, we construct alternative SECs using the \texttt{LDPC}~\cite{Roffe_LDPC_Python_tools_2022} package for all four codes, as well as the \texttt{QUITS}~\cite{kang2025quantum} package for HGP625 and FB126. To apply \texttt{QUITS} to FB126, we map the fibre bundle code to a quasi-cyclic lifted-product code and perform a randomised search over $10^4$ seeds to identify a circuit of depth $t = 8 + 2$; details of this mapping are provided in \app{FB-mapping}. For HGP625, we find a circuit of depth $t = 12 + 2$, matching the depth reported in~\cite{kang2025quantum}. For all constructions, we match the noise models and apply ancilla idling-time reduction (see \fig{ancilla_idling}) to both \texttt{LDPC} and \texttt{QUITS} circuits to ensure a fair comparison.
The results of alternative circuit constructors are plotted alongside the main results in \fig{data}. 

Let us further comment on the performance discrepancies observed between the different approaches. In all examples considered, the LRCs achieve smaller circuit depth than the alternative circuits. In most cases, this translates into substantially lower overall qubit-idling error rates. One exception is the Tanner200 code, whose LRC depth is comparable to that of the circuit generated by the \texttt{LDPC} package. Moreover, when the SEC follows an alternating-XZ structure (as in the \texttt{LDPC} circuits), ancilla initialisation and measurement for one check type can overlap with the CNOT operations of the other. This significantly reduces qubit idling time for the \texttt{LDPC} circuit, and therefore for Tanner200, the net improvement in idling time achieved by the LRC is marginal.

With respect to circuit distance, non-exhaustive distance tests indicate that the circuits generated by the \texttt{LDPC} package have lower distance than the corresponding LRCs for the Haah128 and FB126 codes. In contrast, the \texttt{LDPC} circuit for Tanner200 and the \texttt{QUITS} circuit for FB126 achieve the same circuit distance as their respective LRCs. As expected, all HGP625 circuits exhibit identical circuit distance equal to the code distance $d=8$. However, we observe a significantly worse residual distance profile for the $X$-type checks of the \texttt{QUITS} circuit compared to the other constructions. We believe this explains the high $Z$-logical observable error rate, which dominates the overall logical performance.

\vspace{5pt}
\noindent\textit{Note added:} During the preparation of this manuscript, two related works appeared on arXiv proposing alternative frameworks for SEC design, namely \textit{AlphaSyndrome}~\cite{liu2026alphasyndrometacklingsyndromemeasurement} and \textit{PropHunt}~\cite{viszlai2026prophuntautomatedoptimizationquantum}. In general, their techniques attain higher accuracy but at a much higher computational cost, thereby providing a complementary approach to ours. Both methods prioritise distance preservation, which we noted is often at the expense of increased circuit depth.

For comparison, we selected one representative circuit (typically with a larger qubit count) from each framework and compared it against the circuit generated by our method for the same code. We then evaluated and compared their logical performance under our noise model, as detailed in \app{alpha_prop}.
While the example circuit produced by \textit{PropHunt} achieved a larger circuit distance, this came at the cost of more than a threefold increase in circuit depth. Similarly, the circuit generated by \textit{AlphaSyndrome} exhibited a depth more than twice that of our corresponding construction. These large circuit depths lead to substantial drop in their logical performance.

\section{Discussion}
\label{sec:conclusion}

We have presented a simple, fast, and effective framework for constructing high-performance syndrome-extraction circuits (SECs) for arbitrary CSS codes. 
The resulting circuits are single-ancilla, CNOT-based SECs that, for many practically relevant quantum codes, achieve low depth and are optimised to preserve a large circuit distance $d_{\mathrm{circ}}$, minimise qubit idling time, and reduce the minimum-weight failure multiplicity $N_{\mathrm{fail}}$. 
Applying our framework to several distinct classes of LDPC codes, we observe consistent improvements in logical performance relative to existing SEC construction approaches.

As part of the framework, we establish several bounds on the circuit distance, corresponding to different settings and computational trade-offs. 
We expect that these theoretical results will have relevance beyond the scope of this work. 
A particularly interesting result is that interleaving cannot decrease the circuit distance.
More precisely, starting from a non-interleaved SEC, adjusting the timing of CNOTs while preserving their relative ordering within each check yields an interleaved circuit whose distance cannot decrease relative to the original non-interleaved SEC (but may increase).
This suggests that further performance gains may be achieved through clever interleaving strategies, which could serve as a foundation for future work on SEC design.

Another important direction is to extend the framework beyond static memory settings to codes with time-dependent structure, such as Floquet~\cite{hastings2021dynamically} and morphing codes~\cite{vasmer2022morphing,shaw2025lowering}, as well as to scenarios involving logical operations and code deformations~\cite{Horsman2012LatticeSurgery}. 
Adapting the left--right circuit structure to such dynamical codes will require rethinking how CNOT schedules are defined and coordinated across successive SEC rounds. 
Moreover, the overlap of consecutive rounds induced by the staggering of $X$ and $Z$-type checks may introduce additional challenges or overheads in practical scenarios. 
Exploring how to preserve circuit distance and logical performance under such dynamical conditions remains an important topic for future work.

\section{Acknowledgements}

AS thanks Oscar Higgott for fruitful discussions and for directing the authors to previous proposals on LRCs, and Balint Koczor for assistance with numerical simulations. AS and DEB were supported by the Engineering and Physical Sciences Research Council [grant number EP/Y004310/1]. AS thanks UKRI for the Future Leaders Fellowship Theory to Enable Practical Quantum Advantage (MR/Y015843/1). 
For the purpose of Open Access, the author has applied a CC BY public copyright licence to any Author Accepted Manuscript version arising from this submission.

\bibliography{biblio.bib}

\appendix

\section{LRC depth for lifted-product codes} 
\label{app:code-partitions}

In the following, we show that LRCs achieve near optimal depth for the class of lifted product (LP) codes.
LP codes are defined by parity check matrices
\begin{align}
\label{eq:LP-def}
H_X & = [C \otimes I_{m_D} \,|\, I_{m_C} \otimes D],\\
H_Z & = [I_{n_C} \otimes D^T \,|\, C^T \otimes I_{n_D}],
\label{eq:LP-def2}
\end{align}
where $C \in R^{m_C \times n_C}$ and $D \in R^{m_D \times n_D}$ are called \emph{protographs} with elements from some ring $R$, and $\otimes$ is the Kronecker product (over $R$). For constructing LRC, we use the block structure to partition LP into left/right data qubit sets, i.e., $L_X = C \otimes I_{m_D}$, $R_X = I_{m_C} \otimes D$, and so on. 

Consider the binary representations for any $a \in R$, denoted $\rho(a)\in \mathbb{F}_2^{l \times l}$.
LP codes can be quasi-cyclic, where $R = \mathbb{F}_2[x]/(x^l-1)$ and $\rho(a)$ is a circulant matrix, or, more generally, based on some group algebra $R = \mathbb{F}_2[G]$, in which case $\rho(a)$ is a combination of permutation matrices. For both, the crucial property we exploit is that the row- and column-weights of $\rho(a)$ are equal to some constant $w_a$.
This allows us to assign a weight to each element in each protograph $M \in \{C, D\}$ and construct a weight enumeration matrix $W\in \mathbb{Z}^{m_M \times n_M}$ with elements
\begin{equation}\nonumber
    W(M)_{ij} = w_{a_{ij}}
\end{equation}
where $a_{ij}$ is the ring element of $M_{ij}$.
With this, we denote the maximum row(column) weight of the protograph $M$ as the maximum of the sum of elements in any row(column) of $W$, i.e. $r(M) = \max_i\sum_j w_{a_{ij}}$ and $c(M) = \max_j\sum_i w_{a_{ij}}$, and $\delta(M) = \max (r(M), c(M))$. 
The rest of the proof for depth optimality (up to constants) for LRCs of LP codes follow in the analogous way to the HGP proof in the main text. Where instead of binary matrices $A$ and $B$ we use protographs $C$ and $D$ and their associated row(column) weights, and where the weight enumeration matrix $W(M)$ extends through the Kronecker products.

Note that if, for example, the row-weights of $\rho(a)$ were not all equal to some constant $w_a$, then we would not be able to define weight enumeration matrix $W(M)$ as above, and the rest of the proof would not generally hold.

Bivariate bicycle (BB) codes are special cases of LP codes where protographs $C$ and $D$ are of size $1\times 1$ and the ring $R$ is the group algebra of a product of two cyclic groups. Therefore, the same LRC depth arguments hold for BB codes.

\section{Construction of fibre bundle codes}
\label{app:fiber-bundle}

In the following, we introduce fibre bundle codes, describe their construction, and present the explicit code instance used in \sect{numerics}.

Fibre bundle (FB) codes are closely related to hypergraph-product codes (see \sect{LRC}) but incorporate an additional \emph{twisting} mechanism that can lead to an increased code distance. They are defined by parity check matrices 
\begin{align*}
H_X & = [B \otimes_\varphi I_{m_F} \,|\, I_{m_B} \otimes F],\\
H_Z & = [I_{n_B} \otimes F^T \,|\, B^T \otimes_\varphi I_{n_F}],
\end{align*}
for some choice of binary matrices $B \in \mathbb{F}_2^{m_B \times n_B}$ and $F \in \mathbb{F}_2^{m_F \times n_F}$, where the additional twisting mechanism is reflected by $\varphi$. Adopting terminology from topology, $B$ is the base code, $F$ is the fibre code and the fibre is twisted along the base according to some connection $\varphi$. 

Specifically, given any base matrix $B$, we have block matrices
\begin{align*}
(B \otimes_\varphi I_m)[i,j] = b_{ij}\varphi(i, j), \quad \varphi(i,j)\in \mathbb{F}_2^{l \times l}
\end{align*}
where $b_{ij}$ are binary elements of $B$ and $\varphi(i, j)$ is a matrix of size $m \times m$ representing some permutation action.
To ensure that FB have commuting checks, it is sufficient to have that each permutation commutes with the fibre code $F$, i.e. $\varphi_{ij}F = F \varphi_{ij}, \ \forall i, j$.

We construct a $[[126, 8, 9]]$ fiber bundle code (denoted FB126) as follows. We take the base code $B$ to be a $7 \times 7$ redundant parity-check matrix of the classical Hamming code $[7, 4, 3]$ such that the parity-check matrix $B^T$ also defines a code with parameters $[7, 4, 3]$. Akin to the original work, we choose the fibre to be a $d=9$ repetition code with periodic boundaries. This allows us to choose any power $k$ of a cyclic permutation as our twists (one can visualise them as $k \times 2\pi/9$ rotations of the circle graph representing the repetition code).
Specifically, we choose the following circular \emph{twist matrix}:
\begin{align}
T = 
\begin{bmatrix}
 1&  1& -1&  5&  0& -1& -1\\
 -1& -1&  4&  3&  2& -1&  3\\
 5& -1&  2& -1&  1&  0& -1\\
 7& -1& -1&  7& -1&  0&  4\\
 8&  5&  8& -1& -1& -1&  2\\
 -1&  6&  4&  3& -1&  1& -1\\
 -1&  8& -1& -1&  0&  6&  2
\end{bmatrix},
\label{eq:twist-matrix}
\end{align}
where each non-negative integer element corresponds to the power $k$ of the applied cyclic permutation. This yields a connection $\varphi(i, j) = S^{T_{ij}}$, where $S$ is a size $9\times9$ circular shift matrix, i.e., $S_{ij} = 1$ iff $i+1 = j$. The elements with value $-1$ indicate a zero element in the base code, and therefore, the corresponding block is the $0$ matrix.

\section{Mapping FB codes to QC-LP codes}
\label{app:FB-mapping}

To apply the \texttt{QUITS} framework to our fibre bundle (FB) code, we first map it to an equivalent lifted-product (LP) code. Owing to the circular twists used in our FB construction, this mapping yields a \emph{quasi-cyclic} LP code (see \app{code-partitions}).

Consider the LP code as defined by \eq{LP-def} and \eq{LP-def2} over the quotient ring $R= \mathbb{F}_2[x]/(x^l-1)$. Then we map
\begin{equation}
    b_{ij}\varphi(i, j)\to C_{ij},
\end{equation}
where each ring element $C_{ij} = b_{ij}x^{T_{ij}}$ following the twist matrix $T$ (\eq{twist-matrix}). If $b_{ij} = 0$, then $\rho(C_{ij}) = 0$, i.e. $C_{ij}$ is the zero element of the ring.

The finish the mapping we take $D$ to be a size $1\times 1$ protograph with $D_{11} = x^0 + x^1$, such that $F = \rho(D)$.

\section{Numerical distance estimation and upper bounds}
\label{app:bp-osd-upper-bounds}

In several places in this work we require upper bounds on the quantity
\begin{equation}
l_\text{min} = \min_{x \in \mathbb{F}_2^n} \left\{ |x| : H(x+y) = 0,\; Ax \neq 0 \right\},
\end{equation}
where $H \in \mathbb{F}_2^{M\times n}$ and $A \in \mathbb{F}_2^{k\times n}$ and $y \in \mathbb{F}_2^n$. 
In this work we consider two scenarios. 
In one scenario, $l_\text{min}$ is the residual distance of a residual error $y$ and $H$ and $A$ correspond to $H_X$ and $L_X$ (or $H_Z$ and $L_Z$).
In the other scenario, $y=0$ and $l_\text{min}$ corresponds to the extended-code distance for a residual error set $\mathcal{R}$, where $H$ and $A$ correspond to $H_X^{\mathrm{ext}}(\mathcal{R})$, $L_X^{\mathrm{ext}}(\mathcal{R})$ (or $H_Z^{\mathrm{ext}}(\mathcal{R})$, $L_Z^{\mathrm{ext}}(\mathcal{R})$).

We use two related randomised procedures to obtain upper bounds on $l_\text{min}$ by finding explicit bitstrings $x$ such that $H(x+y) = 0, Ax \neq 0$, and using their weight as an upper bound on $l_\text{min}$.

\subsection{Simple distance estimator}
\label{app:distance-upper-bound-basic}

The simplest approach, following~\cite{bravyiEtAl2024}, repeatedly constructs a random extension of $H$ and solves a single decoding instance using BPOSD.
We use this approach to estimate all residual distances for all codes in \sect{numerics} as follows.

\begin{enumerate}
\item Initialize $l_\text{ub} \gets \infty$.

\item For each trial $t=1,\dots,T$:
\begin{enumerate}
\item Sample a random vector $h$ from the row space of $H$ and independently sample a random nonzero vector $l$ from the row space of $A$.

\item Form $g = h + l$ and construct the extended matrix $H' = \begin{pmatrix} H \\ g \end{pmatrix}$.

\item Define the syndrome $\sigma \in \mathbb{F}_2^{M+1}$ with $\sigma_{M+1}=1$ and all other entries zero.

\item Use the BPOSD decoder for $H'$ to find a vector $x \in \mathbb{F}_2^n$ satisfying $H'(x+y)=\sigma$.

\item If $|x|<l_\text{ub}$, update $l_\text{ub} \gets |x|$ and set $t\gets 0$.
\end{enumerate}
\item Return $l_\text{ub}$ as an upper bound on $l_\text{min}$.
\end{enumerate}

The BPOSD settings we use for a rapid estimation of residual distances are the same as in \tab{BPOSD_full}, but with \emph{Max BP iterations} set to $200$. 
We use $T=100$ for each of the codes considered.
This procedure is simple and often sufficient for upper bounding residual distances due to the limited size of $H$.
Note that if one uses a provable minimum-weight decoder~\cite{ott2025decisiontreedecodersgeneralquantum,beni2025tesseractsearchbaseddecoderquantum,landahl2011fault} in place of a heuristic decoder such as BP-OSD, this algorithm can be modified to find the distance exactly, but may beome prohibitively slow.

\subsection{Adaptive-prior distance estimator}
\label{app:distance-upper-bound-advanced}

For difficult instances we use a more aggressive approach introduced in \cite{beverland2025failfasttechniquesprobe} that repeatedly perturbs the decoder's prior probabilities in order to bias the search toward promising regions of the solution space. 
This is used for all extended-code distance estimations for the gross code.
The overall structure remains the same as for the simple estimator: each iteration constructs a random extension of $H$ and attempts to decode the syndrome corresponding to the appended row.

\begin{enumerate}
\item Fix the number of random-row trials $T_{\mathrm{row}}$ and prior perturbations $T_{\mathrm{prior}}$, and initialize $l_\text{ub}\gets\infty$.

\item For each row trial $t=1,\dots,T_{\mathrm{row}}$:
\begin{enumerate}
\item Sample $h$ from the row space of $H$ and a random nonzero $l$ from the row space of $A$, set $g=h+l$, and form $H'=\begin{pmatrix}H\\g\end{pmatrix}$.

\item Define the syndrome $\sigma\in\mathbb{F}_2^{M+1}$ with $\sigma_{M+1}=1$ and all other entries zero, and compute the support of $g$.

\item For each prior trial $i=1,\dots,T_{\mathrm{prior}}$:
\begin{enumerate}
\item Generate channel probabilities for the decoder near a small base error rate and increase the probabilities on a random subset of $\mathrm{supp}(g)$ to bias the search.

\item Run the BPOSD decoder on $(H',\sigma)$ to obtain a candidate $x$.

\item If $H'(x+y)=\sigma$, compute $|x|$ and update $l_\text{ub}\gets |x|$ if this weight is smaller.

\item Update $l_\text{ub}$ whenever a smaller-weight solution is found.
\end{enumerate}
\end{enumerate}

\item Return $l_\text{ub}$ as an upper bound on $l_\text{min}$.
\end{enumerate}

\section{Circuit simulation details}
\label{app:sim-details}

We estimate the logical error rate of the QEC system by estimating the logical error rate for $Z$ (and $X$)-type observables separately. We construct \texttt{Stim}~\cite{Gidney_2021} circuits where all data qubits are initialised in $\ket{0}$ (respectively $\ket{+}$) states such that the logical observables corresponding to logical qubits are initialised in $\overline{\ket{0}}$ (respectively $\overline{\ket{+}}$). At the end of the circuit we measure all data qubits along the same basis as initialised, and this measurement data provide us information to reconstruct a perfect round of syndrome extraction (for the specific basis). Logical error rate for each observable type $P_Z(p)$ and $P_X(p)$ is estimated based on sampling as described in the main text, and final, reported logical error rate is approximated as $P(p) = P_Z(p) + P_X(p)$.

Using \texttt{Stim} and \texttt{LDPC} packages, each circuit was converted to the detector error model, sampled and decoded using belief-propagation ordered-statistics decoder BPOSD~\cite{roffe2020}.
\tab{BPOSD_full} lists the BPOSD parameters used in the simulation, other parameters were set to  default values.

\begin{table}[h]
\centering
\caption{BPOSD parameters}
\label{table:bposd-params}
\begin{tabular}{ll}
\toprule
Parameter & Value \\
\midrule
BP method & Min-sum (MS) \\
Max BP iterations & $10^4$ \\
MS scaling factor & $0$ \\
OSD method & OSD-CS \\
OSD order & $7$ \\
\bottomrule
\end{tabular}
\label{table:BPOSD_full}
\end{table}

\section{Comparison with AlphaSyndrome and PropHunt}
\label{app:alpha_prop}

To briefly investigate the recently proposed frameworks \textit{AlphaSyndrome}~\cite{liu2026alphasyndrometacklingsyndromemeasurement} and \textit{PropHunt}~\cite{viszlai2026prophuntautomatedoptimizationquantum}, we consider their respective circuits for the $[[61,1,9]]$ hexagonal colour code (from \textit{AlphaSyndrome}) and the $[[108,18,4]]$ random quantum Tanner code (from \textit{PropHunt}), and compare them to the corresponding circuits generated by our framework. Their logical performance was simulated under the same noise model and simulation settings used for the other codes in this work (see \sect{SEC} and \app{sim-details}). The results are shown in Fig.~\fig{alpha_prop}. As expected, the substantial increase in circuit depth offsets the potential gain in circuit distance under realistic error-rate regimes.

\begin{figure}[tbp]
    \centering
    \includegraphics[width =1\linewidth]{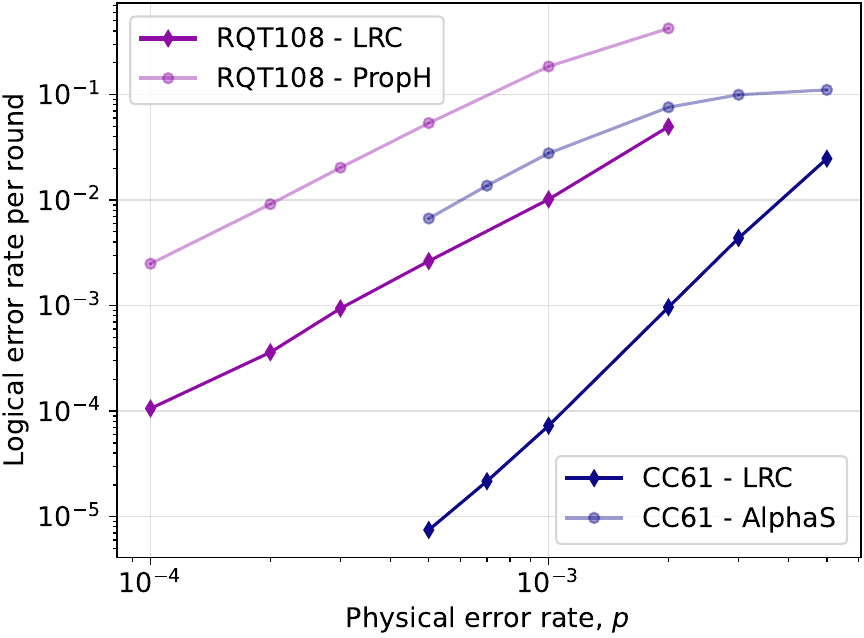}
    \caption{
    Comparison of the logical performance of circuits for the $[[61,1,9]]$ colour code (CC61) and the $[[108,18,4]]$ random quantum Tanner code (RQT108). The circuits were generated using our approach (LRC) and the recently proposed frameworks \textit{AlphaSyndrome}~\cite{liu2026alphasyndrometacklingsyndromemeasurement} (AlphaS) and \textit{PropHunt}~\cite{viszlai2026prophuntautomatedoptimizationquantum} (PropH), respectively.
    }
    \label{fig:alpha_prop}
\end{figure}

\section{Residual profile analysis}
\label{app:Residual_analysis}

\begin{figure}[tbp]
    \centering
    \includegraphics[width =1\linewidth]{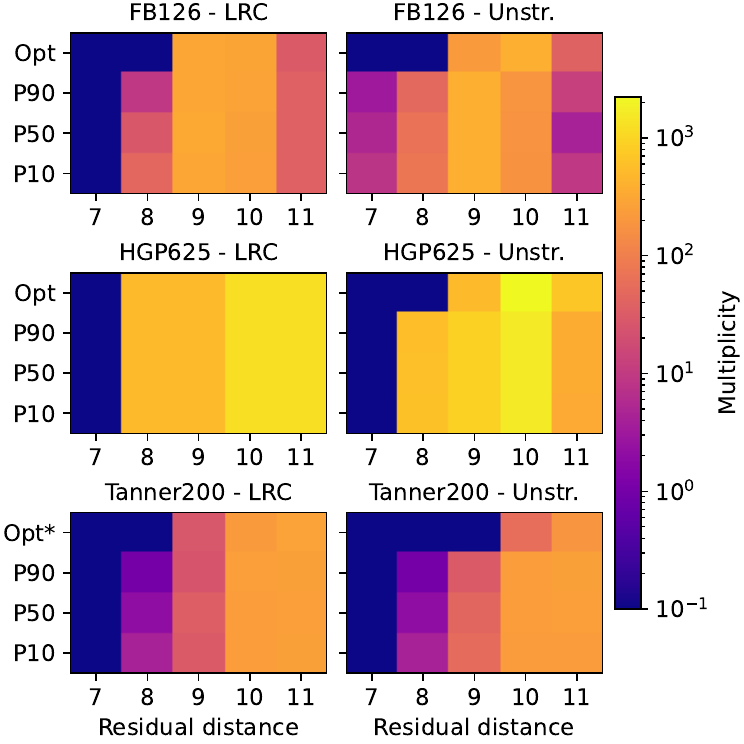}
    \caption{
    Representative residual distance profiles corresponding to the 10th, 50th, and 90th percentiles of the lexicographic ranking, together with the optimal residual distance profile, shown for each code and for both CNOT sequence sets -- the set of all left--right sequences (LRC) and the set of all unstructured CNOT sequences (Unstr.). No sequence in either set exhibits a residual distance $\Delta_\text{min}(\mathcal{C}) < 7$.
    }
    \label{fig:res-analysis}
\end{figure}

The benefits of short circuit depth $t$ and low ancilla idling time $\tau_A(\mathcal{C})$ are clear -- fewer idling qubits lead to reduced circuit noise and thus better performance. To assess the importance of the residual error profiling, we begin by examining the range of residual distance profiles obtained by LRCs. Afterwards, we investigate how well the circuit residual distance $\Delta_\text{min}(\mathcal{C})$ and their profiles $W(\mathcal{C})$ translate into QEC system's metrics such as the circuit distance $d_{\mathrm{circ}}$ and the min-weight failure multiplicity $N_{\mathrm{fail}}$.

\subsection{Residual distance profile distribution}

\begin{table}[t]
\centering
\scalebox{0.90}{%
\renewcommand{\arraystretch}{1.2}
\setlength{\arrayrulewidth}{0.4pt}

\begin{tabular}{|c|c?c|c?c|c|}
\cline{3-6}
\multicolumn{2}{c?}{} & \multicolumn{2}{c?}{\textbf{LRC}} & \multicolumn{2}{c|}{\textbf{Unstructured}} \\
\cline{3-6}
\multicolumn{2}{c?}{} & \textbf{Opt} & \textbf{P50} & \textbf{Opt} & \textbf{P50} \\
\thickhline

\multirow{2}{*}{\textbf{FB126}} & $d_{\mathrm{est}}$ &
\textbf{8} & 8 & 7 & 6 \\
\cline{2-6}
& $N_{\mathrm{est}}$ &
$\mathbf{3.10 \times 10^3}$ & $2.56 \times 10^4$ & $1.48 \times 10^4$ & $1.81 \times 10^2 $ \\
\thickhline

\multirow{2}{*}{\textbf{HGP625}} & $d_{\mathrm{est}}$ &
8 & 8 & \textbf{8} & 8 \\
\cline{2-6}
& $N_{\mathrm{est}}$ &
$6.13 \times 10^5$ & $6.14 \times 10^5$ & $\mathbf{4.27 \times 10^5}$ & $6 \times 10^5$ \\
\thickhline

\multirow{2}{*}{\textbf{Tanner200}} & $d_{\mathrm{est}}$ &
7 & 7 & \textbf{9} & 7 \\
\cline{2-6}
& $N_{\mathrm{est}}$ &
$1.22 \times 10^4$ & $1.51 \times 10^4$ & $\mathbf{9.08 \times 10^5}$ & $2.11 \times 10^4$ \\
\thickhline
\end{tabular}%
}
\caption{Estimated circuit distance $d_{\mathrm{est}}$ and min-weight failure multiplicity $N_{\mathrm{fail}}$ for optimal and 50th percentile circuit representatives of both $S_{LRC}$ (LRC) and $S_U$ (Unstructured) sets of circuits. In the limit of $p\to 0$, these parameters estimate the logical error rate as $P_{\mathrm{est}}(p) = N_{\mathrm{est}}\,p^{\lceil{d_{\text{est}}/2\rceil}}$. Best estimate for each code is highlighted in bold.}
\label{table:residual_analysis}
\end{table}

We begin by comparing the residual distance profiles produced by min-depth LRCs to those attainable through arbitrary CNOT sequencing. For codes Tanner200, HGP625 and FB126, we sample up to $1000$ CNOT sequences from the set of all sequences possible for LRCs, denoted $S_{\mathrm{LRC}}$, who differ by permutations of the same depth-optimal edge-colouring. We then rank them according to their residual profiles. The sequences whose profile is lexicographically smaller are ranked higher.  In a similar fashion, we sample $1000$ CNOT sequences from a set of \emph{unstructured} sequences, $S_U$, in which the CNOT sequence for each check is arbitrary. These sequences are also ranked using the same criterion.

For both $S_{\mathrm{LRC}}$ and $S_U$, we additionally identify a representative CNOT sequence with the optimal (Opt) residual distance profile. Since $S_{\mathrm{LRC}} \subset S_U$, the optimal unstructured sequence is guaranteed to achieve a profile that is at least as good as that of the optimal LRC. For Tanner200, however, the high check weights make an exhaustive search infeasible. In this case, we approximate the optimal circuits as follows: for $S_{\mathrm{LRC}}$, we select the highest-ranked CNOT sequence among the $1000$ sampled; while for $S_U$, we sample up to $9!$ permutations of each check’s CNOT sequence and assemble the total CNOT sequence using the best-performing permutation for each check, as determined by the check's residual distance profile.

\fig{res-analysis} shows representative residual distance profiles corresponding to the 10th, 50th, and 90th percentiles of the ranking, as well as the optimal profile, for each code and CNOT sequence class. Notably, the optimal LRC sequence achieves a profile that is at least as good as that of the 90th-percentile unstructured sequence. This suggests that, with respect to CNOT sequencing quality as measured by the circuit's residual distance profile, the set $S_{\mathrm{LRC}}$ consistently contains above-average circuits.

\subsection{Logical performance}

Next, we investigate how well the residual distance profiles of representative CNOT sequences correlate with the circuit distance $d_{\mathrm{circ}}$ and the min-weight failure multiplicity $N_{\mathrm{fail}}$. To this end, we construct single-round SEC circuits (followed by an additional perfect round) for the representative CNOT sequence of the 50th percentile (P50) and the optimal (Opt) profile for each code.

Because unstructured CNOT sequences generally do not admit low-depth implementations, we implement a sequential execution of checks for all circuits to enable a fair comparison. Specifically, all checks are implemented one at a time, with idling noise set to zero -- the CNOTs corresponding to check $z_1$ are executed first, followed by those for $z_2$, and so on, until the final check $x_m$. This construction allows us to investigate the effect of CNOT sequencing in isolation.
Each circuit is then converted into a detector error model, and the techniques introduced in~\cite{beverland2025failfasttechniquesprobe} are used to bound both the circuit distance and the min-weight failure multiplicity.

Using the gathered data, the circuit distance is upper bounded as
\begin{equation}
\nonumber
    d_{\mathrm{circ}} \le d_{\mathrm{est}} = \min\!\left(d_{\mathrm{est}}^{(X)},\, d_{\mathrm{est}}^{(Z)}\right),
\end{equation}
where $d_{\mathrm{est}}$ denotes the estimated distance, and the superscripts $X$ and $Z$ indicate the type of logical observables being tested. 
Similarly, we obtain a lower bound on $N_{\mathrm{fail}}$ associated with the leading-order term of the logical failure probability. Assuming the order of this term is $p^{\lceil d_{\mathrm{est}}/2\rceil}$, we have
\begin{equation}
\nonumber
    N_{\mathrm{fail}} \ge N_{\mathrm{est}} = N_{\mathrm{est}}^{(X)} + N_{\mathrm{est}}^{(Z)}.
\end{equation}
In the limit \(p \to 0\), these quantities define an estimate of the logical error rate as
\begin{equation}
    P_{\mathrm{est}}(p) = N_{\mathrm{est}}\,p^{\lceil d_{\mathrm{est}}/2\rceil}. \nonumber
\end{equation}

The resulting values of $d_{\mathrm{est}}$ and $N_{\mathrm{est}}$ for each representative circuit are reported in \tab{residual_analysis}.

We find that, with the exception of a single case, residual distance profiles serve as reliable indicators of the relative logical performance of circuits in the low-$p$ regime. The sole outlier is the optimal unstructured circuit for the FB126 code. Although its profile attains the best rank, it exhibits a lower estimated $d_{\mathrm{circ}}$ than several alternative circuits. This discrepancy is not surprising -- residual profiles offer a fast and inexpensive means of analysis individual CNOT sequences, but they do not account for interactions between residual errors arising from different checks. This limitation suggests a natural extension of the present work, incorporating more computationally intensive yet more comprehensive analyses of CNOT sequences.
We also note that improvements in $d_{\mathrm{circ}}$ and $N_{\mathrm{fail}}$ do not necessarily translate directly to better QEC system's performance under realistic decoding strategies and error rates. Nevertheless, such metrics can facilitate the discovery of high-performance circuits.

\subsection{Syndrome weight}

Finally, we observe that for codes with a natural left--right partitioning, such as HGP and FB codes, the compressed detector error model check matrices of LRCs exhibit a higher density of non-zero elements than those obtained from other SECs. This increased density is also reflected in a larger number of detection events per shot during circuit simulation, which in turn slows down decoding. 
As a concrete example, for the two-round SEC LRCs shown above (Opt and P50), we observe an average increase of approximately $8.4\%$ for FB126 and $14\%$ for HGP625 in the check-matrix density relative to the corresponding unstructured circuits. Consequently, the average BPOSD decoding time is increased by a factor of approximately $\times1.3$ for FB126 and $\times2$ for HGP625.

We speculate that this behaviour arises from large overlapping detection regions on ancilla qubits, which result from the manner in which check commutation is achieved in two-block codes. While such large overlap improves the detection of harmful hook errors, it also results in a higher average number of detection events. We leave a more detailed investigation of this trade-off to future work.

\section{Analysis of gross code circuits}
\label{app:gross-analysis}

The gross code (see in \fig{gross-code}) has code distance $12$, yet to date, no stabiliser-extraction circuit has been discovered with circuit distance greater than $10$.
Here we use the techniques developed in the main text to study the SECs for the gross code in detail.

\begin{figure*}[t]
  \centering
  \includegraphics[width=0.8\linewidth]{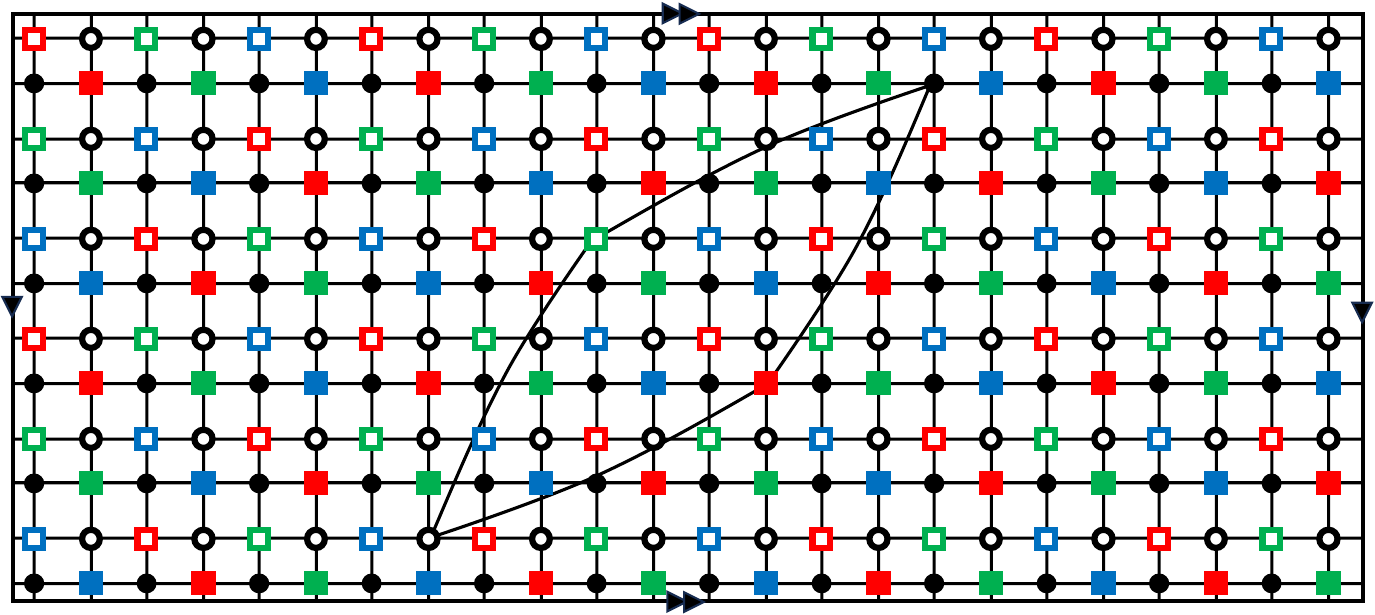}
  \caption{Gross code bipartite Tanner graph. 
  Circles are data qubits. 
  Filled squares are Z-type checks, empty squares are X-type checks. 
  There are 6 data qubits in the support of each check, with nearest neighbours marked by horizontal and vertical edges, and with two diagonal long-range edges which are only shown one check of each type (the others are obtained by translations of these). 
  There is a natural three-colouring of checks of each type such that two checks that share support have distinct colours.
  We will use this three-colouring to specify some families of stabiliser extraction circuits.}
  \label{fig:gross-code}
\end{figure*}

\subsection{Eliminating gross-code circuits}

Our goal is to identify a gross code SEC with the highest possible circuit distance.
\propos{all-circuits-bound} in the main text shows that, while one may initially have hoped to obtain a non-interleaved SEC with circuit distance 12 for the gross code (matching the code distance), no such circuit exists.
Given this, we here seek a non-interleaved SEC with circuit distance 11.
To do this, we use our extended-code distance upper-bound methods from \sect{Freedom} to progressively narrow the space of viable candidate circuits by proving a sequence of propositions.

Many SECs (such as the standard one from Ref.~\cite{bravyiEtAl2024}) involve all X checks performing the same operations uniformly, but translated with respect to one another (given the symmetry of the code). 
Unfortunately, the following proposition tells us that such simple classes of uniformly tiled non-interleaved circuits are too restricted to give a circuit distance of 11.

\begin{proposition}[Uniformly-tiled gross circuits]
\label{propos:uniform-tiling-bound}
Let $d_{\mathrm{circ}}$ be the circuit distance of any uniformly-tiled non-interleaved SEC for the gross code, by which we mean that the local restriction of the circuit to each X check in \fig{gross-code} is the same. 
Then,
\begin{equation}
d_{\mathrm{circ}} \ \le\ 10.
\end{equation}
\end{proposition}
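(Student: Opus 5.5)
The plan follows the proof of \propos{all-circuits-bound}, but now uses the extra constraint that the ordering is the same on every X check. In a uniformly tiled non-interleaved SEC, the local CNOT order on each X check is a single ordering $\pi$ of the six support positions, and these positions are labelled consistently across checks by the translation symmetry of the gross code shown in \fig{gross-code}. So the X-type residual-error set contains, for every X check $x_j$, the three nontrivial residuals $\mathcal{R}_\pi(x_j)$ obtained by translating those of a reference check. Modulo stabiliser equivalence there are at most $90$ such orderings, as counted in \propos{all-circuits-bound}. For each $\pi$, set $\mathcal{R}^{\mathrm{tile}}_\pi = \bigsqcup_j \mathcal{R}_\pi(x_j)$, which is a subset of $\mathcal{E}_X$. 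By the second inequality of \thm{circ_dist_bounds} (equivalently \corol{rest_dist}),
\[
d_{\mathrm{circ}} \le \max_{\pi} d_{\mathrm{ext}}\bigl(\mathcal{R}^{\mathrm{tile}}_\pi\bigr),
\]
so it is enough to find, for every $\pi$, a logical operator of weight at most $10$ in the extended code $H^{\mathrm{ext}}(\mathcal{R}^{\mathrm{tile}}_\pi)$.

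First, I would discard the $45$ orderings already handled in the proof of \propos{all-circuits-bound}, where a single check's residuals give $d_{\mathrm{ext}}\le 10$. Since the extended-code distance is monotone non-increasing under adding residual columns, the bound remains valid after tiling. Second, for each of the remaining $45$ orderings, I would build the tiled extended code. It has $144$ qubit columns plus $3\cdot 72$ residual columns, with stabiliser-equivalent residuals of weight greater than three replaced by their complements. I would then run the adaptive-prior estimator of \app{bp-osd-upper-bounds} to find an explicit $x$ with $H^{\mathrm{ext}}x^{\mathsf T}=0$, $L^{\mathrm{ext}}x^{\mathsf T}\neq 0$ and $|x|\le 10$. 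Each such $x$ is a certificate that anyone can check by a matrix multiplication, so the proof reduces to exhibiting $45$ (or at most $90$) witnesses.

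The symmetry of the code can shrink the case count. Some automorphisms of the gross code permute the six positions of a check while commuting with the translations, for example the exchange of the two bivariate polynomial blocks combined with an inversion. Orderings related by such an automorphism give isomorphic tiled extended codes, so only one representative per orbit needs a witness. This reduction should be verified carefully or skipped.

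The main obstacle is the second step: the single-check bound leaves exactly $45$ orderings at $11$, and a weight-$10$ witness must be found for every one of them. Such witnesses must exploit several translated hook errors acting together, for instance two or three residuals from different checks, each saving one qubit against a weight-$12$ logical. Heuristic BPOSD search may miss them. If the search stalls, I would seed it by hand. Starting from a minimum-weight logical operator of the gross code, I would look for translates of the tiled residuals whose supports each overlap it in at least two qubits. I would then bias the decoder priors toward those residual columns, as in the adaptive-prior procedure.
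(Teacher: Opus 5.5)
Your proposal is correct and follows essentially the same route as the paper: tile each of the $90$ single-check residual classes over all $72$ X checks, note that the tiled set lies inside $\mathcal{E}$, and combine the second inequality of \thm{circ_dist_bounds} with numerically found logical operators of weight at most $10$ in each tiled extended code. Pruning the $45$ classes already settled by \propos{all-circuits-bound} via monotonicity is a harmless saving the paper does not bother with. Your optional symmetry reduction is best skipped, because the block exchange combined with inversion that you cite sends X checks to Z checks (an X--Z duality) rather than permuting the positions within an X check.
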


\begin{proof}
The proof is similar to that of \propos{all-circuits-bound}, but now instead of $\mathcal{R}_i$ corresponding to three errors that arise for a single $X$-type check, it corresponds to the union of three-error sets that arise when the same set is tiled across all 72 $X$-type checks (such that $\mathcal{R}_i$ contains $72 \times 3$ elements). 
As explained in the proof of \propos{all-circuits-bound}, there are 90 inequivalent residual error sets for a weight-six check, such that $i =1,2,\dots,90$.

Any uniformly-tiled non-interleaved SEC must have a residual-error set $\mathcal{E}$ that includes one of these sets $\mathcal{R}_i$ as a subset, and as such \thm{circ_dist_bounds} implies $d_{\mathrm{circ}} \le \max_i d_{\mathrm{ext}}(\mathcal{R}_i)$.
We find that of the extended-code distance is at most 10 for all of the 90 possible residual error sets $\mathcal{R}_i$, proving that $d_{\mathrm{circ}} \leq 10$ as required. 
\end{proof}

Given that uniformly tiled non-interleaved SECs cannot achieve circuit distance 11, we consider non-uniform tilings. 
Given that each of the $X$-type and the $Z$-type Tanner graphs are three-colourable (as described in \fig{gross-code}) we seek circuits which are uniformly tiled for checks of the same colour. 
We can then exclude almost all possibilities within this class as follows.

\begin{proposition}[Three-colour tiled gross circuits]
\label{propos:3colour-tiling-bound}
Consider non-interleaved circuits with the same relative ordering on all checks with same colour.
For the part of the circuit measuring Z checks (respectively X checks), there are at most 3 relative orderings of three-colour tiled circuits with circuit distance 11.
These 3 orderings are the same up to a cyclic permutation of colours.
\end{proposition}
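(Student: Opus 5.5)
We follow the strategy of \propos{all-circuits-bound} and \propos{uniform-tiling-bound}. The circuit distance is bounded above by the extended-code distance of any subset of its residual errors, via \thm{circ_dist_bounds}. We therefore enumerate all possible residual-error sets arising from three-colour tiled orderings and eliminate those whose extended code has a logical operator of weight at most $10$. We treat the $Z$-check part; the $X$-check part follows by exchanging $X$ and $Z$. Fix the three colour classes of $Z$ checks from \fig{gross-code}. A three-colour tiled ordering is a triple $(\sigma_1,\sigma_2,\sigma_3)$ of relative CNOT orderings, one per colour class. Each $\sigma_c$ is determined, up to residual equivalence, by one of the $90$ inequivalent residual-error sets of a weight-six check. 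Let $\mathcal{R}^{(c)}_{i}$ denote the union of the three-element residual sets tiled over all $24$ checks of colour $c$ using the $i$th ordering.

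The first step is single-colour pruning. For each colour $c$ and each $i$, we upper bound $d^Z_{\mathrm{ext}}(\mathcal{R}^{(c)}_i)$ using the adaptive-prior estimator of \app{bp-osd-upper-bounds}. Any $i$ for which we find a logical operator of weight $\le 10$ is discarded for that colour. The second inequality of \thm{circ_dist_bounds} justifies this, since $\mathcal{R}^{(c)}_i\subseteq\mathcal{E}$. Single-check pruning is available as a cheap first filter: by \propos{all-circuits-bound}, $45$ of the $90$ orderings already give distance $\le 10$ from a single check. Only the surviving orderings, plausibly a handful per colour, are carried forward.

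The second step is pairwise and then triple pruning. For each pair of colours and each pair of surviving orderings, we bound $d^Z_{\mathrm{ext}}(\mathcal{R}^{(c)}_i\sqcup\mathcal{R}^{(c')}_j)$ and discard pairs achieving $\le 10$. We then do the same for the remaining full triples $\mathcal{R}^{(1)}_i\sqcup\mathcal{R}^{(2)}_j\sqcup\mathcal{R}^{(3)}_k$. Monotonicity of $d_{\mathrm{ext}}$ under adding residual columns lets us prune hierarchically, so the number of extended-distance computations stays small. The translation symmetry of the gross code, together with the fact that the colour classes are permuted by translations, lets us treat colours symmetrically and reduce the work further.

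The claimed conclusion is that exactly three triples survive and that they are cyclic relabelings of one another. To verify the second part, we check that the survivors have the form $(\sigma_a,\sigma_b,\sigma_c)$, $(\sigma_b,\sigma_c,\sigma_a)$, $(\sigma_c,\sigma_a,\sigma_b)$. We then confirm that this orbit is forced by the code's symmetry: a translation mapping colour classes cyclically maps a distance-$\le 10$ certificate for one triple to a certificate for its cyclic shift. This explains why the survivors come in a single orbit.

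The main obstacle is that the argument is one-sided. We can only certify elimination, by exhibiting explicit low-weight logical operators in the extended code. We cannot certify survival, because the heuristic BP-OSD search may fail to find a weight-$10$ logical operator that exists. This is why the statement reads ``at most 3'': the survivors are merely not excluded. Reliably finding the weight-$\le 10$ certificates in the many near-miss cases, where the extended distance is plausibly exactly $11$ or borderline, is where the adaptive-prior search must be run aggressively. That step is the computational crux of the proof.
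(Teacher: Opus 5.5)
Your proposal is correct and follows the paper's proof. The paper likewise prunes to $45$ sets using a single check, then prunes by tiling one colour class (finding the same $4$ survivors for every colour), and then checks the $4^3=64$ colour triples, leaving $3$; every elimination is certified by an explicit low-weight logical operator in the extended code via \thm{circ_dist_bounds}. Your intermediate pairwise stage is an optional extra that the paper skips, and your translation-symmetry remark (a translation cycling the colour classes carries each elimination certificate to the cyclically shifted triple) is a sound explanation of what the paper only reports computationally: that the same sets survive for each colour and that the survivors form a single cyclic orbit.
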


\begin{proof}
We start by considering all 90 of the residual sets for an isolated Z-type check.
We exclude those with extended-code distance upper-bounded by 10, since they cannot participate in any circuit having extended-code distance 11, leaving 45 residual sets. 

Next, we tile all red checks with a residual set, and upper bound the resulting extended-code distances.
Scanning through the 45 residual sets, we find all but 4 have extended-code distance of 10 or below.

We apply the same approach separately to the blue checks and to the red checks, finding that precisely the same 4 residual sets that survived the screening when tiled on the red checks survive when tiled on all checks of another colour too. 

Next, we consider all combinations of these 4 error sets, applying the same error set to all checks of a given colour.
Of the $4^3=64$ combinations, we find that all but 3 have extended-code distance of 10 or below. 
\end{proof}

We ultimately obtain only 3 tiling combinations of residual errors for which the extended-code distance upper bound was 11, with all other three-coloured tilings having been eliminated as they result in circuit distance at most 10. 
We show one of the three in \fig{3-coloured-residuals}, the others are cyclic permutations of the colours.

\begin{figure}[t]
    \centering
    \includegraphics[width=0.8\linewidth]{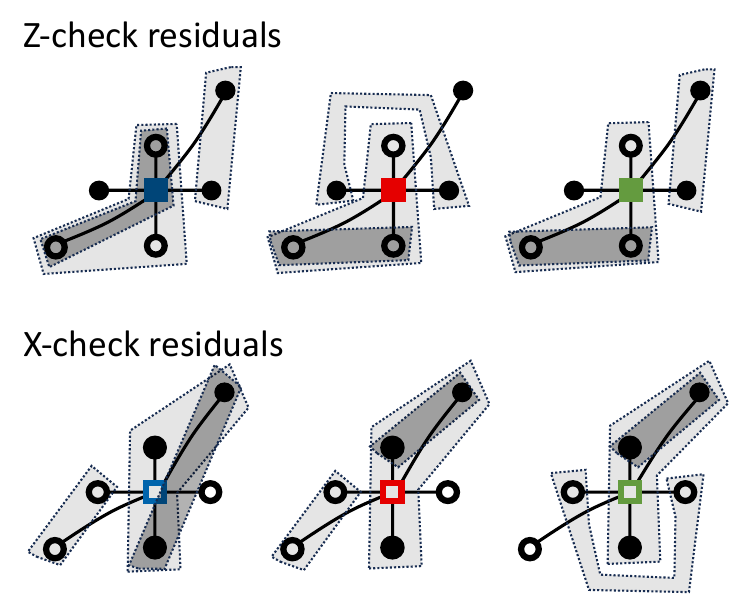}
    \caption{Residual error patterns for a three-colour tiling circuit with extended-code distance upper bound $11$.
    Each of the X- and Z-type residual patterns can be independently cyclically permuted across the
    colour classes (B$\rightarrow$R$\rightarrow$G) without reducing this bound. All other patterns
    ($90^3 - 3 = 728997$ for each of X and Z) have extended-code distance $\le 10$.
    }
    \label{fig:3-coloured-residuals}
\end{figure}

\subsection{An explicit 3-coloured circuit}

In \fig{3-coloured-circuit-compressed} we show an explicit schedule that achieves the residual error pattern in \fig{3-coloured-residuals}.
We have tested this circuit extensively by searching for minimum-weight logical operators, conjecture that this has circuit distance 11.
The justification of this conjecture is based on using the same technique to find and collect minimum-weight logical operators for a single QEC cycle of the standard circuit and for our circuit.
While a weight-10 logical was discovered after a single outer loop of the search, no min-weight logical operators of weight below 11 for our circuit after ten thousand outer loops (see \app{distance-upper-bound-advanced}).

\begin{figure}[t]
    \centering
    \includegraphics[width=0.8\linewidth]{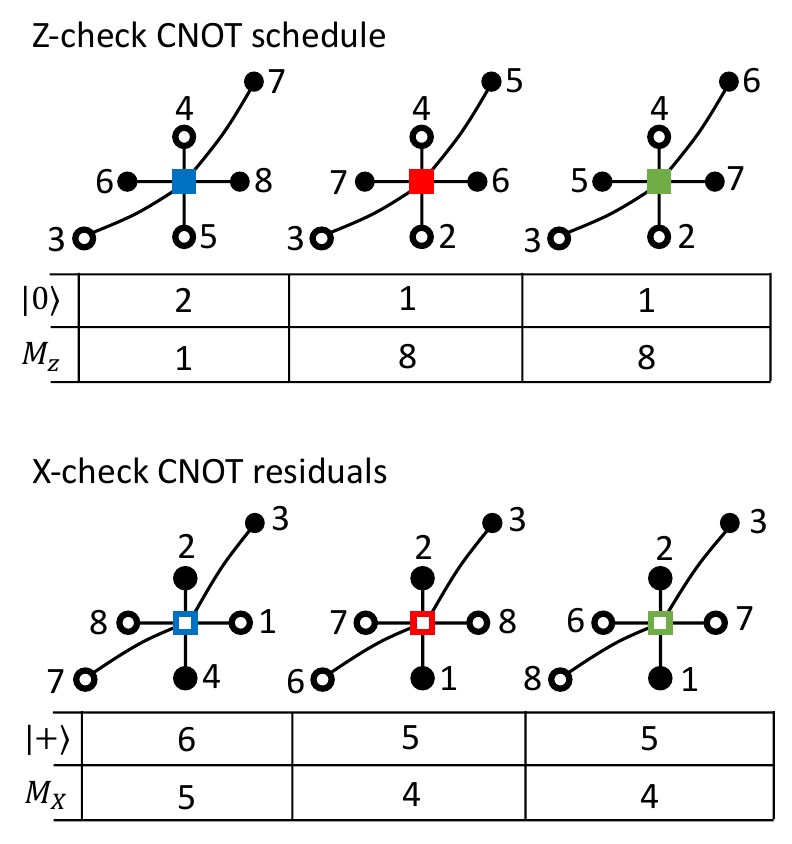}
    \caption{
    A 3-colour-tiled left--right schedule realizing the residual error pattern in \fig{3-coloured-residuals}.
    The schedule has depth 8, matching the original gross-code circuit.
    Its extended-code distance is upper bounded by 11, and we conjecture this bound is tight.
    For each check, the neighbouring data qubits are labelled by their CNOT time steps; the preparation and measurement times are shown below.
    }
    \label{fig:3-coloured-circuit-compressed}
\end{figure}

We could further investigate non-staggered, interleaved variants of the circuit, which one could hope might increase the circuit distance.
However, our motivation to do this is reduced by the very nice recent paper on morphing bivariate bicycle codes which produces circuits involving the same number of physical and logical qubits as the gross code, but with circuit distance twelve Ref.~\cite{shaw2025lowering}.

\end{document}